\newenvironment{myproof}[1][\unskip]{\paragraph{\textit{Proof #1.}} }{\qed}
\newenvironment{freeTextEnvironment}[1][\unskip]{ \par\noindent {\bf #1} \noindent }
\newtheorem{example}{Example}
\newtheorem{theorem}{Theorem}
\newtheorem*{theorem*}{Theorem}
\newtheorem{lemma}{Lemma}[section]
\newtheorem{proposition}{Proposition}[section]
\newtheorem{definition}{Definition}[section]
\newtheorem{corollary}{Corollary}[section]
\newtheorem{remark}{Remark}[section]
\newcommand{\Aut}{\mathrm{Aut}}
\newcommand{\bt}{\textbf{t}}
\newcommand{\bx}{\textbf{x}}
\newcommand{\bs}{\textbf{s}}
\newcommand{\SLC}{\mathrm{SL} (2,\mathbb C)}
\newcommand{\CC}{{\mathbb{C}}}
\newcommand{\HH}{{\mathbb{H}}}
\newcommand{\PP}{{\mathbb{P}}}
\newcommand{\ZZ}{{\mathbb{Z}}}
\newcommand{\p}{\partial }
\newcommand{\F}{{\mathcal F}}
\newcommand{\T}{{\mathcal T}}
\newcommand{\X}{{\mathcal X}}
\newcommand{\Z}{{\mathcal Z}}
\newcommand{\calO}{{\mathcal O}}
\newcommand{\M}[2]{ { \overline{\mathcal M}_{#1, #2} } }
\begin{document}

\title{$\rm SL(2, \mathbb C)$ group action on Cohomological field theories}
\date{\today}
\author{Alexey Basalaev}
\address{National Research University Higher School of Economics, Vavilova 7, 117312 Moscow, Russia}
\email{aabasalaev@edu.hse.ru}
\address{Ruprecht-Karls-Universit\"at Heidelberg, Germany}
\email{abasalaev@mathi.uni-heidelberg.de}

\maketitle

\begin{abstract}
We introduce the $\SLC$ group action on a partition function of a Cohomological field theory via a certain Givental's action. Restricted to the small phase space we describe the action via the explicit formulae on a CohFT genus $g$ potential. 
We prove that applied to the total ancestor potential of a simple elliptic singularity the action introduced coincides with the transformation of Milanov--Ruan changing the primitive form (cf. \cite{MR}). 

\end{abstract}
 
\section{Introduction}

  Cohomological field theories (CohFT for brevity) were introduced in the early 90s in \cite{KM}. They appeared to play an important role in many different subjects of mathematics --- they are key objects in the mirror symmetry conjectures, integrable hierarchies by \cite{DZ,FSZ} and geometry of the moduli space of curves \cite{PPZ}. An important tool to work with CohFTs that is used in all the aspects listed is Givental's action. However in some cases (and in singularity theory in particular) it does not give any feeling of the initial object geometry, where another --- $\SLC$ action is defined naturally. 
  
  \subsection{Cohomological Field Theories}
   Denote by $\M{g}{k}$ the moduli space of stable genus $g$ curves with $k$ marked points. Let $V$ be a $n$--dimensional $\CC$--vector space with a non--degenerate scalar product $\eta$. 
  A Cohomological field theory on the state space $(V, \eta)$ is a system of linear maps 
  $
    \Lambda_{g,k}: V^{\otimes k} \rightarrow H^*(\M{g}{k}, \mathbb C)
  $ 
  for all $g,k$ such that $\M{g}{k}$ exists and is non--empty. It is required to satisfy certain axioms that arise naturally from the geometry of the moduli space of curves. 
  
  By \cite{G,FSZ,Sh} there are two group actions on the Fock space of all partition functions of the CohFTs. 
  These two \textit{different} actions are now known as the actions of the \textit{upper--triangular} and \textit{lower--triangular} Givental groups or $R$-- and $S$--actions respectively.
  
  The genus zero data of every CohFT with unit defines a Frobenius manifold. The notion of Frobenius manifold was introduced by B. Dubrovin in 90s. It provides a generalization of the flat structures of K. Saito introduced in the early 80s and is crucial step for a total ancestor potential of a singularity.
  
  The structure of a Frobenius manifold $M$ is defined by the so--called Frobenius potential $F_M \in \mathbb C[[t^1, \dots, t^n]]$ that is subject to the system of non--linear  PDEs called WDVV equation.
  It was observed already by \cite{D} (cf. Exercise~B.3) that there is an affine group acting on the space of WDVV solutions, generated by translations 
  and inversion. 
 In this paper we investigate this affine group action from the point of view of Givental's action and its applications to singularity theory.
  
  \subsection{Modularity of CohFTs and Frobenius manifolds}
  The following examples should be considered as a motivation for our work.
  \begin{example}[Appendix~C in \cite{D}]\label{example: chazy}
    Consider the 3--dimensional Frobenius manifold structures on $\CC^2\times\HH$ with the pairing $\eta_{i,j} = \delta_{i+j,4}$, satisfying the quasi--homogeneity condition $E\cdot F_M = 2 F_M$ for $E = t^1 \frac{\p}{\p t^1} + \frac{1}{2} t^2 \frac{\p}{\p t^2}$. The potential of such a Frobenius manifold reads:
    \[
      F_M(\textbf t) = \frac{1}{2} (t^1)^2t^3 + \frac{1}{2} t^1 (t^2)^2 - \frac{(t^2)^4}{16} \gamma(t^3).
    \]
    The WDVV equation on $F_M$ is equivalent to the Chazy equation on $\gamma$. It reads: 
    $
      \gamma^{\prime \prime \prime} = 6 \gamma \gamma^{\prime\prime} - 9 (\gamma^\prime)^2.
    $
    Every solution $\gamma$ defines a Frobenius manifold potential.
    On the space of Chazy equation solutions there is a transitive $\SLC$ group action. 
    Hence there is a $\SLC$ action on the space of 3--dimensional Frobenius manifolds satisfying the quasi--homogeneity conditions as above.
    
    One can also apply a Givental's group element $R$ to a Frobenius manifold $M$ with the Frobenius potential $F_M(\bt)$ giving in general some new Frobenius manifold that will be denoted by $\hat R \cdot M$.
    However it's not clear at all how the analytical action of this example should be connected to the action of Givental.
  \end{example}
  
  Another important example comes from the Gromov--Witten theory.
  
  \begin{example}\label{example: ellipic GW}  Let $\X_N$ be the so--called elliptic orbifolds. $\X_2 := \PP^1_{2,2,2,2}$, $\X_3 := \PP^1_{3,3,3}$, $\X_4 := \PP^1_{4,4,2}$  and $\X_6 := \PP^1_{6,3,2}$. The explicit genus zero potentials $F_0^{\X_N}$ of the first two orbifolds were found in \cite{ST} and of the second two by the author (available at \cite{B_hp}). All these potentials $F_0^{\X_N}$ can be written 
  via the quasi--modular forms w.r.t the group $\Gamma(N) := \big\lbrace A \in {\rm SL}(2,\ZZ) \ | \ A \equiv \begin{pmatrix} 1 & 0 \\ 0 & 1 \end{pmatrix} {\rm mod}(N) \big\rbrace$ (see \cite{MR,MS2,SZ}).
  On a weight $k$ quasi--modular form $f(\tau)$ one can act with $A = \begin{pmatrix} a & b \\ c & d \end{pmatrix} \in \Gamma(N)$ by
  $f(\tau) \to \dfrac{1}{(c\tau + d)^k} f \left(\dfrac{a\tau+b}{c\tau+d}\right)$.
  However after applying this action to all quasi--modular forms, building up $F_0^{\X_N}$ 
  we will get a function that is not associated with any CohFT anymore (end even will not even be a solution to WDVV equation).
  One of the purposes of this paper is to define the action of $A$ on a genus $g$ potential of a CohFT, s.t. the quasi--modular forms involved are transformed as above.
  \end{example}

  \begin{example} By \cite{S} there is a Frobenius manifold structure on the base space $\mathcal{S}$ of a hypersurface singularity unfolding, that depends heavily on the additional choice of the so--called \textit{primitive form} $\zeta$ of Saito. 
  It was later found that at a point $\bs \in \mathcal{S}$ one can associate the CohFT partition function $\mathcal{A}_{\zeta,\bs}$ to a hypersurface singularity with the primitive form $\zeta$ fixed (see \cite{M} for the precise definition).
  
  In case of simple--elliptic singularities using certain Givental's action Milanov and Ruan gave in \cite{MR} the formula connecting $\mathcal{A}_{\zeta_1,\bs}$ and $\mathcal{A}_{\zeta_2,\bs^\prime}$ with a two different primitive forms $\zeta_1$,$\zeta_2$ of the same singularity and (in general) different points $\bs, \bs' \in \mathcal{S}$.
  
  In \cite{BT} the authors proposed particular $\SLC$ action on the space of $3$--dimensional Frobenius manifold as in Example~1 above
  to write down explicitly the effect of the primitive form change on the Frobenius manifold potential. However this action is written in a completely different form comparing to the formula of Milanov--Ruan and was not extended to the total ancestor potential. 
  
  In this paper we show that the two approaches agree by developing $\SLC$--action and its Givental's analog in the general setting (see Theorem~\ref{theorem: MR-BT equivalence}).
  \end{example}
  \subsection{Goals and results}
  Fix a CohFT $\Lambda_{g,k}$ on $V = \langle e_1,\dots, e_n\rangle$ with the pairing $\eta_{p,q} := \delta_{p+q,n+1}$ and unit vector $e_1$. Let $\Z = \exp\left(\sum_{g \ge 0} \hbar^{g-1} \F_g \right)$ be the partition function of it. The functions $\F_g$, called \textit{genus $g$ potentials}, are function of the formal complex variables $t^{d,\alpha}$ for all $1 \le \alpha \le n$ and $d \in \ZZ_{\ge 0}$. In particular, the variable $t^{0,\alpha}$ is associated with the insertions of vector $e_\alpha$ without a psi--class (see Section~\ref{section: GiventalsAction} for details.)
  
  Next to $e_1$, the pairing fixed gives the other distinguished vector --- $e_n$\footnote{We will comment on the basis and pairing fixing later in the text.}. 
  
  Let $A \in \rm SL(2, \mathbb C)$ act on $\CC$ by a linear--fractional transformation. Denote $A \cdot t^{0,n} := \dfrac{a t^{0,n} + b}{c t^{0,n} + d}$ for $A = \begin{pmatrix}a & b \\ c & d \end{pmatrix}$.
  In this paper we ``quantize'' this linear--fractional transformation 
  to the action $\hat A$ on the Fock space of all CohFT partition functions in the following way. 
  \\
  \begin{freeTextEnvironment}[The quantization conditions of $\SLC$--action:]
    \begin{itemize}\label{quantization conditions}
    \item The function $\tilde \Z := \hat A \cdot \Z$ is a CohFT partition function again,
    \item By the action $\hat A$ the variable $t^{0,n}$ of $\Z$ is transformed to $\tilde t^{0,n} := A \cdot t^{0,n}$ and $\tilde \Z$ is a function of $\tilde t^{0,n}$.
    \end{itemize}  
  \end{freeTextEnvironment}
  \noindent We only provide the certain quantization satisfying the conditions above and leave the question of uniqueness of such a quantization for the future work.
  
  Being a partition function, $\tilde\Z$ should also have an $\hbar$ expansion as $\Z$ does. For $\F_0$ and $\tilde\F_0$ being the coefficients of $\hbar^{-1}$ in $\Z$ and $\tilde \Z$ above, one concludes that the quantization condition above should hold \textit{in genus zero}. Namely, these functions define certain Frobenius manifolds $M$, $M^A$, and $\hat A$ should also act on the space of Frobenius manifolds and WDVV solutions. The quantization $\hat A$, we develop in this paper, is marked by the following result.
  
  \begin{theorem}[cf. \cite{B_o}]
    For $N=3,4,6$ and any $A \in \Gamma(N)$ the genus zero small phase space Gromov--Witten potential of $\X_N$ satisfies:
    $$
      \left( F_0^{\X_N} \right)^A = F_0^{\X_N}.
    $$
  \end{theorem}
  Namely, the genus $0$ small phase space potentials of $\X_N$ have explicit expressions via the quasi--modular forms w.r.t $\Gamma(N)$ (recall Example~\ref{example: ellipic GW}), but these potentials by their own are \textit{modular} w.r.t. $\Gamma(N)$ by the action $\hat A$ we introduce.
  In this text we show this theorem on the example of the orbifold $\X_2$ (not covered by \cite{B_o}).
\\
\\
  \noindent\textbf{Main results} of this paper are theorems~\ref{th:FrobeniusManifold} and~\ref{theorem: MR-BT equivalence}.
  
  In part~(a) of Theorem~\ref{th:FrobeniusManifold} we show the isomorphism between the Frobenius manifolds $M^A$ and $\hat R^\sigma \cdot M$ for the particular Givental's group element $R^\sigma$, depending on $A$.
  By this we get the bridge between the Dubrovin affine group action on the space of WDVV solutions
  and the particular Givental's action on a CohFT. It's important to note that the first one is easy to write down in the closed formula, but it's only applicable in genus $0$. The latter action is defined in all genera but hard to pack in a closed formula.

  In part~(b) of Theorem~\ref{th:FrobeniusManifold} we extend the $\SLC$ action to the higher genera providing the particular formulae for the $\SLC$--action on the genus $g$ small phase space potential of a CohFT. Its immediate corollary is that the Givental's action of $R^\sigma$ mentioned above satisfies the quantization conditions of the $\SLC$--action in full genera. The explicit formulae of Theorem~\ref{th:FrobeniusManifold} show in particular that we get a group action.
  
  In terms of partition functions of the CohFT's the $\SLC$--action we introduce appears to be equivalent to the \textit{composition} of one $R$--action and two $S$--actions. This is a surprising result by itself because the actions of the upper--triangular and lower--triangular groups do not commute. We discuss this in Section~\ref{section: coordinateFree}.
  
  Theorem~\ref{theorem: MR-BT equivalence} can be considered as a main theorem of this work. It shows that the $\SLC$--action introduced is equivalent to the primitive form change for simple--elliptic singularities.

  \subsection{Applications in mirror symmetry}
  In the context of \textit{global mirror symmetry} (cf. \cite{CIR}) a B--model is treated globally, so that one can take it ``at the different phases''. Two different phases of the same B--model can give two different mirror A--models. 
  It's conjectured then that the partition functions of these two A models are connected by the certain Givental's action. This conjecture is known under the name of \textit{CY/LG correspondence conjecture}.
  
  It was proved in \cite{BP} and \cite{B6dim} that the $\SLC$--action we introduce in this paper gives indeed the desired CY/LG correspondence for the simple--elliptic singularities. Namely, the B--model is given by $\mathcal{A}_{\zeta,\bs}$ of a simple--elliptic singularity (with the primitive form playing the role of the ``phase'') and the A--models are given by the Gromov--Witten theory of $\X_N$ and the so--called FJRW theory of the same singularity with a group action.
  It turns out that the analytic $\SLC$--action is easy to compute in these cases, and by Theorem~\ref{th:FrobeniusManifold} we get the particular Givental's action corresponding to it. This result was expected due to Theorem~\ref{theorem: MR-BT equivalence} and some ideas of the global mirror symmetry. 
  
  \subsection{Organization of the paper}
  In Section~\ref{section: AnalyticalQuantization} we give an analytical approach to the $\SLC$ action on the genus $0$ part of the CohFT and show the modularity of the GW--theory of $\PP^1_{2,2,2,2}$ under this action. 
  We recall basic facts about Givental's action in Section~\ref{section: GiventalsAction}. 
  In Section~\ref{section: aaa} we define Givental's action analog of the analytical $\SLC$ action.
  In Section~\ref{section: coordinateFree} we write the $\SLC$ action in terms of Givental's group action only.
  We discuss singularity theory applications in Section~\ref{section: totalAncestorPotential}.
  
  \subsection{Acknowledgement}
  The author is grateful to Sergey Shadrin for his help with Givental's action, Claus Hertling for many useful comments and to Davide Veniani for the editorial help. The author is also very grateful to Maxim Kazarian for sharing his unpublished notes and to the anonymous referees for many valuable comments.

  \section{Analytical quantization of $\SLC$ action}\label{section: AnalyticalQuantization}
    In this section we develop an analytical approach to the quantization of the $\SLC$ action on the genus zero data of the CohFTs  --- Frobenius manifolds.
    \subsection{Frobenius manifolds}\label{section: FrobeniusManifolds}
    Let $M$ be a connected complex manifold. Assume its tangent sheaf $\T_M$ to be endowed with a non--degenerate $\calO_M$--symmetric bilinear form $\eta$ and an associative and commutative $\calO_M$--algebra structure. Let $\circ$ stand for the product of this algebra and $e$ for its unit vector field. In what follows we consider the metric on $M$ defined by $\eta$. Let $\nabla$ stand for its Levi--Civita connection.
    
    \begin{definition}
      The data $(M,\circ,\eta, e)$ is called a \textit{Frobenius manifold} if the following conditions are satisfied. 
      The metric $\eta$ is flat, $\nabla e = 0$, $\eta$ and $\circ$ satisfy the Frobenius algebra property: $\eta(u \circ v, w) = \eta(u,v \circ w)$ for any $u,v,w \in \T_M$, the tensor field $\nabla_z \left( \eta(u \circ v, w) \right)$ for any $u,v,w,z \in \T_M$ is symmetric in all four components.      
    \end{definition}
    \begin{remark}
      Usually one requires also the product $\circ$ to satisfy some quasi--homogeneity condition, introducing a special $\T_M$ element, called \textit{Euler vector field}. We do not raise such a condition in this paper, assuming a larger class of Frobenius manifolds.
    \end{remark}

    Note that different points of a Frobenius manifold $M$ have generally different algebra structures on their tangent spaces. In many cases working with a Frobenius manifolds one uses only a local data of it. Moreover in some situations only the germ of a Frobenius manifold is defined naturally. However in the ``good'' special cases (like for example in mirror symmetry) a Frobenius manifold appears to be defined \textit{globally}, so that it's reasonable to consider the algebra structures at the different points of $M$. 
    
    In order to work with a Frobenius manifold locally consider the following data.
    It follows from the definition of a Frobenius manifold that locally at every point $p \in M$ there is a system of \textit{flat coordinates} $t^1, \dots, t^n$, in which the metric $\eta$ has constant components. Associate a basis of $T_p M$ with the vectors $\p/\p t^i$ and consider $\eta_{ij}$ as components of $\eta$ in this basis. We also assume that $\p/\p t^1$ is the unit of $\circ_p: T_pM \otimes T_pM \to T_pM$.
    
    It follows from the definitions of a Frobenius manifold that there is a function $F(\bt) = F(t^1, \dots, t^n)$, represented by a convergent power series in $t^1, \dots, t^n$, s.t. the structure constants $c_{ij}^k$ of the product $\circ_p$ satisfy:
    $$
      c_{ij}^k(\bt) := \sum_{p=1}^n \frac{\p^3 F}{\p t^i \p t^j \p t^p} \eta^{pk}, \quad 1 \le i,j,k \le n, 
    $$
    where $\eta^{ij} := \sum_{p,q} \eta_{pq} \delta^{pi} \delta^{qj}$. Because of the Frobenius algebra property and choice of the coordinates we have $\eta_{ij} = c_{1ij} := \partial_{t^1}\partial_{t^i}\partial_{t^j} F$.
    From the associativity and commutativity of the product one deduces that the function $F(\bt)$ satisfies WDVV equation --- for every fixed $1 \le i,j,k,l \le n$ holds:
    $$
      \sum_{p,q} \frac{\partial^3 F }{\partial t^i \partial t^j \partial t^p} \ \eta^{pq} \ \frac{\partial^3 F}{\partial t^q \partial t^k \partial t^l} 
      = 
      \sum_{p,q} \frac{\partial^3 F}{\partial t^i \partial t^k \partial t^p} \ \eta^{pq} \ \frac{\partial^3 F}{\partial t^q \partial t^j \partial t^l},
    $$
    The function $F(\bt)$ is called \textit{potential\footnote{in some articles this function could be also called ``prepotential''.} of the Frobenius manifold $M$}.

    Sometimes we are given first a function $F$ satisfying WDVV equation without any underlying manifold $M$ and convergence property. In these occasions $F$ could anyway define a (germ of) a Frobenius manifold that is called \textit{formal}. We will drop this word assuming it to be clear from the context.
    
    \begin{definition}
      Two Frobenius manifolds $M_1$ and $M_2$ are called (locally) isomorphic if there is a diffeomorphism $\phi: M_1 \to M_2$ such that for some fixed $\bt \in M_1$ and $\phi(\bt) \in M_2$ holds:
      \begin{itemize}
       \item $\phi$ is linear conformal transformation of the metrics of $M_1$ and $M_2$,
       \item the differential of $\phi$ is an isomorphism of the algebras $T_{\bt}M_1$ and $T_{\phi(\bt)}M_2$.
      \end{itemize}
      In this case we write $M_1 \mid_\bt \ \cong \ M_2 \mid_{\phi(\bt)}$.
    \end{definition}
   
    We define now the $\SLC$--action on the space of Frobenius manifolds.
    Let the Frobenius manifold potential $F(\bt)$ have the form
    \begin{equation}\label{equation: Frob potential general form}
      F(\bt) = F(t^1, \dots, t^n) := \frac{1}{2} (t^1)^2 t^n + \frac{1}{2} t^1 \sum_{k=2}^{n-1} t^p t^{n+1-k} + H(t^2, \dots, t^n),
    \end{equation}
    for some function $H(t^2, \dots, t^n)$ not depending on $t^1$.
    Potential $F(\bt)$ defines a formal Frobenius manifold with the metric $\eta_{i,j} = \delta_{i+j,n+1}$ (note however that our results are easily translated to a more general choice of $\eta$ except when $\eta_{1,1} \neq 0$).
    \begin{definition}
      For any $A \in \SLC$ define the function:
      \begin{equation}\label{eq:SLAction}
	\begin{aligned}
	  F^A(\bt)  := \frac{1}{2} (t^1)^2t^n &+ \frac{1}{2}t^1 \sum_{k=2}^{n-1} t^k t^{n+1-k} + \frac{c}{8(ct^n + d)} \left(t^2 t^{n-1} + \dots + t^{n-1}t^2 \right)^2
	  \\
	  &+ (ct^n+d)^2 H \left(\frac{t^2}{ct^n+d}, \dots, \frac{t^{n-1}}{ct^n+d}, \frac{at^n + b}{ct^n+d} \right).
	\end{aligned}
      \end{equation}
    \end{definition}
    
    \begin{remark}
      In the definition above we have applied the change of the variables $\bt \to \tilde \bt$ to the function $F^A$ above, s.t. $\tilde t^n = A \cdot t^n$. Such a special choice of the variable $t^n$ is made because 
      the corresponding vector $\p/\p t^n$ satisfies $\eta(\p/\p t^1, \p/\p t^n) = 1$. However we could have picked any other vector $v = \sum_{k=2}^n a_k \p/\p t^k$, not lying in the kernel of $\eta(\p/\p t^1, \bullet)$ and the corresponding variable. Then applying a linear change of the variables we would land in the setting we used above.
    \end{remark}
    
    It was observed by Dubrovin (cf. Appendix~B in \cite{D}) that there is a non--trivial symmetry of WDVV equation, called ``Inversion transformation'' such that for $F(\bt)$ as above the function $F^I(\hat \bt)$ solves WDVV too:
    \begin{equation*}\label{equation: Inversion}
      \hat I \cdot F := F^I (\hat \bt ) = (t^n)^{-2} \left[ F(\bt) - \frac{1}{2} t^1 \sum_{k=0}^n t^k t^{n+1-k} \right],
    \end{equation*}
    where 
    $\hat t^1 := \sum_k {t^k t^{n+1-k}}/(2t^n)$, $\hat t^n := -1/t^n$ and $\hat t^\alpha := t^\alpha/t^n$ for all $1 < \alpha < n$.
\\
\\
    In the definition above it's clear that if $c=0$, the function $F^A$ differs from the function $F$ just by a linear change of the variables, defining therefore an isomorphic Frobenius manifold structure. In what follows we assume silently only such $A \in \SLC$, that $c \neq 0$.
    
  \subsection{Analytical quantization via composition}
  Fix a matrix $A \in \SLC$. Let $z$ be a coordinate on $\CC$. For an arbitrary function of $z$, the change of the variables $z \to A \cdot z := \frac{a z + b}{c z + d}$ can be written via the following composition.
  \[
    A \cdot z = \frac{a z + b}{c z + d} = T_1 \cdot S_{c^{2}} \cdot I \cdot T_2 \cdot z,
  \]
  where $T_1(z) = z + a/c$, $S_{c^{2}}(z) = c^{2} z $, $T_2(z) = z + d/c$ and $I(z) = -1 / z$. Our aim now is to quantize these changes of the variables to the operators on the space of WDVV equation solutions.
  
  Let $L(z)$ be a linear change of the variables. For a WDVV solution $F = F(t^1,\dots,t^n)$ let $\hat L \cdot F := F(t^1,\dots,t^{n-1},L(t^n))$. It's clear that $\hat L \cdot F$ is a solution to WDVV equation too. Because $T_1,T_2$ and $S_{c^2}$ are linear changes of the variables, we can immediately quantize them to the operators $\hat T_1, \hat T_2$ and $\hat S_{c^2}$ acting on the space of WDVV equation solutions.
  Finally, the operator $I$ is quantized by the Inversion transformation $\hat I$ of Eq.\eqref{equation: Inversion}. Assuming $A$, $T_1,T_2$ and $S_{c^2}$ to act on the domain of definition of the flat coordinate $t^n$ we introduce the definition.
  
  \begin{definition}Define the action $\hat A$ on the space of WDVV solutions:
    $$
      \hat A := \hat T_1 \cdot \hat S_{c} \cdot \hat I \cdot \hat S_{c^{-1}} \cdot \hat T_2.
    $$
  \end{definition}
  \begin{proposition}\label{prop:Composition} For the $\hat A$ as above we have:
    \begin{itemize}
      \item The action $\hat A$ satisfies the quantization condition in genus $0$.
      \item The action $\hat A$ agrees with the formula \eqref{eq:SLAction} up to quadratic terms:
	$$
	  \hat A \cdot F = F^A + \text{quadratic terms}.
	$$
    \end{itemize}
    In particular the function $F^A$ is a solution to WDVV equation.
  \end{proposition}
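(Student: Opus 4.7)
The plan is to verify the two bullet points in turn by analyzing the composition $\hat A = \hat T_2 \cdot \hat S_c \cdot \hat I \cdot \hat S_{c^{-1}} \cdot \hat T_1$ step by step. Note first that the identity $S_c \cdot I \cdot S_{c^{-1}} = S_{c^2}\cdot I$ as Möbius transformations (both send $t\mapsto -c^2/t$) makes this composition consistent with the classical decomposition $A\cdot t^n = T_2\cdot S_{c^2}\cdot I\cdot T_1 \cdot t^n$.

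For the quantization condition, I would argue that each factor preserves the space of solutions to WDVV. The shifts $\hat T_1,\hat T_2$ act by constant translations of $t^n$, which are linear changes of flat coordinates and so trivially preserve WDVV. The scaling $\hat S_c$ is defined by the explicit formula \eqref{equation: S0 action}; a direct inspection of third derivatives confirms that if $F$ has the standard form \eqref{eq:WDVV potential} then so does $\hat S_c\cdot F$, and the associativity equation is satisfied. The inversion $\hat I$ is Dubrovin's symmetry \eqref{equation:Inversion}, whose preservation of WDVV is proved in Appendix~B of \cite{D}. The action on the variable $t^n$ is then obtained by reading off the effect of each factor, recovering exactly $(at^n+b)/(ct^n+d)$ from the decomposition displayed above.

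For the second bullet, I would compute $\hat A \cdot F$ explicitly, starting from a potential $F$ in the form \eqref{eq:WDVV potential} and tracking only terms modulo quadratic polynomials in $\bt$. The step $\hat T_1$ shifts the $t^n$-argument of $H$ by $a/c$. The step $\hat S_{c^{-1}}$ rescales according to \eqref{equation: S0 action}. The essential non-linear step is $\hat I$: applying \eqref{equation:Inversion} produces the Möbius-type arguments $t^i/(ct^n+d)$ and $(at^n+b)/(ct^n+d)$ for the $H$-term, while the subtraction $-\tfrac{1}{2} t^1\sum t^i t^{n+1-i}$ inside the bracket, combined with the prefactor $(t^n)^{-2}$ and the surrounding scalings, precisely generates the quadratic-in-$(\sum_{k\neq 1} t^k t^{n+1-k})$ expression that becomes the $Q$-term of $F^A$. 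The final steps $\hat S_c$ and $\hat T_2$ rescale by $(ct^n+d)^2$ and shift back so that the $H$-term of $F^A$ emerges in the form given by \eqref{eq:SLAction}.

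The main obstacle is the bookkeeping in the inversion step, where all variables mix in a non-linear way and one must identify the precise coefficient $c/(8(ct^n+d))$ of the $Q$-term. The fact that the statement only claims agreement up to quadratic terms is essential, since it allows us to discard the linear and quadratic residues that accumulate from the coordinate shifts in $\hat T_1, \hat T_2$ and from the lower-order corrections in $\hat I$; these are absorbed into the cubic-terms contribution of $F^A$ and would need to be tracked separately if one wanted equality on the nose rather than modulo quadratic terms.
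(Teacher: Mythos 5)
Your proposal is correct and follows essentially the same route as the paper: the first bullet is handled by noting that each factor ($\hat T_i$, $\hat S_c$, $\hat I$) individually preserves WDVV and that the composition reproduces $(at^n+b)/(ct^n+d)$ on $t^n$, and the second by tracking the $H$-term (whose $S_c$ and $S_{c^{-1}}$ scaling factors cancel) and the $Q$-term (generated by the inversion and brought to the coefficient $c/(8(ct^n+d))$ by the final shift $t^n\to t^n+d/c$) modulo quadratic terms. Your added consistency check $S_c\cdot I\cdot S_{c^{-1}}=S_{c^2}\cdot I$ and your identification of the $Q$-term's origin in the subtraction inside \eqref{equation:Inversion} are compatible with, and slightly more explicit than, the paper's argument.
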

  \begin{myproof}
    The quantization presented coincides with the action of $A$ on $t^n$ by the construction and also $\hat A$ acts on the space of WDVV solution as the composition of operators acting of the space of WDVV solutions. 

    For the second part note that applying $\hat S_{c} \cdot \hat I \cdot \hat S_{c^{-1}}$ to $H(t^2, \dots, t^n)$ both $S_c$ and $S_{c^{-1}}$ add their factors to the cubic terms defining the pairing, but they cancel out. 
    The rest is an easy check.
  \end{myproof}

  This approach to the affine group action on the space of WDVV equation solutions was proposed already by \cite{D} (cf. Exercise~B.3) via certain special change of the variables. 

  The following section shows that the $\SLC$--action $F \to F^A$ is important and natural in the certain cases.
  
  \subsection{Example: Gromov--Witten theory of $\PP^1_{2,2,2,2}$}\label{subsection: GW}
  Consider the so--called theta--constants $\vartheta_2(\tau),\vartheta_3(\tau),\vartheta_4(\tau)$ that are the values at $z = 0$ of the Jacobi theta functions $\vartheta_k(z,\tau)$. 
  Let $X_k^\infty(\tau)$ 
  be the logarithmic derivatives of the theta--constants: 
  $$X_k^\infty(\tau) := 2 \frac{\partial}{\partial \tau} \log \vartheta_k(\tau)$$
  Well--known fact is that the functions $(\vartheta_2(\tau))^2$, $(\vartheta_3(\tau))^2$, $(\vartheta_4(\tau))^2$ are modular forms of weight $1$ w.r.t. $\Gamma(2) = \{A \in \mathrm{SL}(2,\ZZ) \ | \ A \equiv \mathrm{Id} \ \text{ mod } 2\}$. In particular holds:
  \begin{equation*}
    \frac{1}{(c\tau+d)} \left(\vartheta_k\left(\frac{a\tau+b}{c\tau+d}\right) \right)^2 = \left( \vartheta_k(\tau) \right)^2 \text{ for any } 
    \begin{pmatrix}
      a &b\\
      c & d
    \end{pmatrix}
    \in \Gamma(2), \ k=2,3,4.
  \end{equation*}
  It's clear from the definition that the functions $X_k^\infty(\tau)$ also satisfy certain corrected --- \textit{quasi}--modularity property.
  
  Frobenius manifold potential of the Gromov--Witten theory of the orbifold $\PP^1_{2,2,2,2}$ was found in \cite{ST} to be\footnote{Following the standard convention we use the lower case variables numbering $t_k$ in the explicit example rather than $t^k$ used to the general formulae.}
  \begin{align*}
    F&^{\PP^1_{2,2,2,2}}_0 (\bt) = \frac{1}{2} t_1^2 t_6 + \frac{1}{4} t_1 \left( \sum_{i=2}^5 t_i^2 \right) + \frac{1}{8} (t_2t_3t_4t_5) \left( X_3^\infty - X_4^\infty \right) \\
    & - \frac{1}{192} \left(\sum_{i=2}^5 t_i^4 \right) \left( 4X_2^\infty + X_3^\infty + X_4^\infty \right) - \frac{1}{32} \left( \sum_{2 \le i<j \le 5 } t_i^2t_j^2 \right) \left( X_3^\infty + X_4^\infty\right),
  \end{align*}
  where we use $X_k^\infty := X_k^\infty(t_6)$.
  It's straightforward to note that one can put this potential to the form of Eq.\eqref{equation: Frob potential general form} by applying the linear change of the variables $t_2,\dots,t_5$. This allows us to use the formula~\eqref{eq:SLAction}.
  
  \begin{proposition}
    Let $A \in \Gamma(2)$, consider the action of it by formula~\eqref{eq:SLAction}. Then we have:
    $$
      \left( F_0^{\PP^1_{2,2,2,2}} \right)^A = F_0^{\PP^1_{2,2,2,2}}.
    $$
    Namely, the additional summand and meromorphic factors of formula~\eqref{eq:SLAction} are all adsorbed by the quasi--modularity property of $X_k^\infty$.
  \end{proposition}
  \begin{myproof}
    This is obtained by easy computations and the modularity formula above.
  \end{myproof}

   It's important to note that taking $A \in \mathrm{SL}(2,\ZZ)$ such that $A \not\in \Gamma(2)$ the genus zero potential of $\PP^1_{2,2,2,2}$ is transformed differently. 
   Taking $A = \begin{pmatrix} 1 & 1 \\ 0 & 1 \end{pmatrix}$ or $A = \begin{pmatrix} 0 & -1 \\ 1 & 0 \end{pmatrix}$ we get $ ( F_0^{\PP^1_{2,2,2,2}}(\bt) )^A = F_0^{\PP^1_{2,2,2,2}}(\tilde \bt)$ for $\tilde \bt$ differing from $\bt$ by the permutations of the variables $t_2,\dots,t_5$. Hence the Frobenius manifold obtained by such an action is still isomorphic to the initial one. We can not expect such a behavior for an arbitrary $A \in \SLC$.
  
  \subsection{Example: Hurwitz--Frobenius manifolds}
  Consider the space of meromorphic functions $\lambda: C \to \mathbb P^1$ on the compact genus $g$ Riemann surface $C$. Fix the pole orders of $\lambda$ to be $\textbf k := \{k_1, \dots, k_m\}$:
    $$
      \lambda^{-1}(\infty) = \{ \infty_1, \dots, \infty_m \}, \quad \infty_p \in C,
    $$
    so that locally at $\infty_p$ we have $\lambda(z) = z^{k_p}$. 

    Such meromorphic function defines a ramified covering of $\mathbb P^1$ by $C$ with the ramification profile $\textbf k$ over $\infty$. Assume further that $\lambda$ has only simple ramification points at $P_i \in \mathbb P^1 \backslash \{0\}$. On the space of the pairs $(C, \lambda)$, considered up to a certain equivalence, B. Dubrovin introduced in \cite[Lecture 6]{D} a Frobenius manifold structure that is now known under the name \textit{Hurwitz--Frobenius} manifold and is denoted by $\mathcal H_{g; \textbf{k}}$.

    When $g=1$ the ramified covering $\lambda$ is written via the elliptic functions and one of the parameters of it (and hence of the Hurwitz--Frobenius manifold) is $\tau \in \HH$, that stands for the modulus of an elliptic curve. For $\textbf{k} = \{2,2,2,2\}$ it has the following form:
    $$
      \lambda(z) = \sum_{i = 1}^4 \left( \wp(z- a_i, \tau) u_i + \frac{1}{2} \frac{\wp^\prime(z - a_i, \tau)}{\wp(z-a_i, \tau)} s_i \right) + c,
    $$
    where $\wp(z,\tau)$ is the Weierstrass function and $a_i$,$u_i$,$s_i$,$c$ are complex parameters. The corresponding Frobenius potential is also written in terms of a certain quasi--modular forms (see \cite{B_hf}). 
    One can consider the Frobenius manifold structure on $\mathcal{H}_{1,\textbf{k}}$ at different points $p_1$ and $p_2$. 
    Because one of the parameters of $p_1$ and $p_2$ is the modulus of the corresponding elliptic curve, it's natural for two to be connected by a certain $\SLC$ action.
    
    Given a Frobenius potential $F_1(\bt)$ encoding the algebra structure at $p_1$ one can consider $F_2(\bt) := F_1(\bt + p_2-p_1)$. However such a shift applied to the function $f(z)$, holomorphic in $\HH$ (like for example the functions $X_k^\infty(\tau)$), reduces drastically the domain of the holomorphicity. In order to keep the domain of holomorphicity big, one should apply not the Taylor series shift, but the following action instead (\cite{Z}):
    $$
      f(z) \to f \left( \frac{\tau_0 - \bar \tau_0 z}{1-z} \right).
    $$
    This action can be realized by the composition of the rescaling and the $\SLC$ action developed above.

  \section{Cohomological field theories and Givental's action}\label{section: GiventalsAction}
  
    We briefly recall some basic facts about the CohFTs and introduce Givental's action in the infinitesimal form computed in \cite{L}. 
    
    \subsection{Cohomological Field Theory axioms}
      Let $(V,\eta)$ be a finite--dimensional vector space with a non--degenerate bilinear form on it.
      Consider a system of linear maps 
      $$
	\Lambda_{g,k}: V^{\otimes k} \rightarrow H^*(\M{g}{k}),
      $$ 
      defined for all $g,k$ such that $\M{g}{k}$ exists and is non--empty.
      It is called Cohomological field theory on $(V,\eta)$ if it satisfies the following axioms.
      
      \begin{itemize}
	\item[\textbf A1:] $\Lambda_{g,k}$ is equivariant w.r.t. the $S_k$--action permuting the factors in the tensor product and the numbering of marked points in $\M{g}{k}$.
	\item[\textbf A2:] For the gluing morphism $\rho:\M{g_1}{k_1+1}\times\M{g_2}{k_2+1} \rightarrow \M{g_1+g_2}{k_1+k_2}$ holds:
	$$
	  \rho^* \Lambda_{g_1+g_2,k_1+k_2} = (\Lambda_{g_1, k_1+1} \cdot \Lambda_{g_2,k_2+1}, \eta^{-1}),
	$$
	where we contract with $\eta^{-1}$ the factors of $V$ that correspond to the node in the preimage of $\rho$.
	\item[\textbf A3:] For the gluing morphism $\sigma: \M{g}{k+2} \rightarrow \M{g+1}{k}$ holds:
	$$
	  \sigma^* \Lambda_{g+1,k} = (\Lambda_{g, k+2}, \eta^{-1}),
	$$
	where we contract with $\eta^{-1}$ the factors of $V$ that correspond to the node in the preimage of $\sigma$.
      \end{itemize}
      In this paper we further assume that the CohFT $\Lambda_{g,k}$ is unital --- there is a fixed vector $\textbf 1 \in V$ called \textit{unit} such that the following axioms are satisfied.
    
      \begin{itemize}
	\item[\textbf U1:] For every $a,b \in V$ we have: $\eta(a, b) = \Lambda_{0,3}(\textbf{1} \otimes a \otimes b)$.
	\item[\textbf U2:] Let $\pi: \M{g}{k+1} \rightarrow \M{g}{k}$ be the map forgetting the last marking, then:
	  $$
	    \pi^* \Lambda_{g,k}(a_1 \otimes \dots \otimes a_k) = \Lambda_{g,k+1}(a_1 \otimes \dots \otimes a_k \otimes \textbf{1}).
	  $$
      \end{itemize}
      In what follows we will denote the CohFT just by $\Lambda$ rather than $\Lambda_{g,k}$ when there is no ambiguity.
      
      We associate to a CohFT certain generating function, called partition function of a CohFT. To do this we have to fix a basis on $V$. Let $e_1,\dots,e_n$ be the basis vectors, s.t. $e_1$ coincides with the unit $\textbf 1$ above.      
      Let $\psi_l \in H^2(\M{g}{k})$ for $1 \le l \le k$ be so--called \textit{psi--classes} (namely, $\psi_l$ is the first Chern class of the line bundle, whose fiber over a $[C] \in \M{g}{k}$ is the cotangent line to $C$ at the $l$--th marked point). The genus $g$ correlators of the CohFT are the following numbers:
      $$
	\langle \tau_{d_1}(e_{\alpha_1}) \dots \tau_{d_k}(e_{\alpha_k}) \rangle_g := \int_{\M{g}{k}} \Lambda_{g,k}(e_{\alpha_1} \otimes \dots \otimes e_{\alpha_k}) \psi_1^{d_1} \dots \psi_k^{d_k}.
      $$
      Whenever $2 -2g-l < 0$ the correlators of a unital CohFT satisfy Dilaton equation:
      \begin{equation*}
	\left\langle \tau_1(e_1) \prod_{k=1}^l \tau_{d_k}(e_{\alpha_k}) \right\rangle_g = (2g-2+l) \left\langle \prod_{k=1}^l \tau_{d_k}(e_{\alpha_k}) \right\rangle_g
      \end{equation*}
      and String equation:
      \begin{equation*}
	\left\langle \tau_0(e_1) \prod_{k=1}^l \tau_{d_k}(e_{\alpha_k}) \right\rangle_g = \sum_{m = 1, d_m \neq 0}^l \left\langle \prod_{k \neq m} \tau_{d_k}(e_{\alpha_k}) \cdot \tau_{d_m -1}(e_{i_m}) \right\rangle_g
      \end{equation*}
      Denote by ${\mathcal F}_g$ the generating function of the genus $g$ correlators:
      $$
	{\mathcal F}_g := \sum_{k=1}^\infty \sum_{\boldsymbol \alpha, \bf d} \frac{\langle \tau_{d_1}(e_{\alpha_1}) \dots \tau_{d_k}(e_{\alpha_k}) \rangle_g}{\Aut( \{ \boldsymbol \alpha, \bf d \})} \ t^{d_1,\alpha_1} \dots t^{d_k,\alpha_k},
      $$
      where for all $1 \le l \le k$ each $\alpha_l$ runs from $1$ to $n$ and each $d_l$ runs from $1$ to $\infty$ with the additional condition that the monomial $t^{d_1,\alpha_1} \dots t^{d_k,\alpha_k}$ appears in the sum only once.
      
      We will call $\F_g$ \textit{genus $g$ potential} of the CohFT. We also use the notation $F_g(\bt) = F_g(t^1,\dots,t^n) := \F_g \mid_{t^{d,\alpha} = 0 \ d \ge 1, \ t^\alpha := t^{0,\alpha}}$, called \textit{small phase space potential}. 
      It is useful to assemble the correlators into a generating function called partition function of the CohFT:
      $$
	\mathcal{Z} := \exp \left( \sum\nolimits_{g \ge 0} \hbar^{g-1} \mathcal F_g \right).
      $$
      \begin{remark}
	Because $\Lambda_{g,k}$ is a system of linear maps, it's clear that another choice of the basis on $V$ will give a genus $g$ potential, differing from the previous one only by a linear change of the variables. Due to this fact we will not refer to the different choice of the basis in what follows.
      \end{remark}
      It follows from the certain topological properties on $\M{0}{k}$ that the function $F_0(\bt)$ is solution of WDVV equation and defines a Frobenius manifold on the small neighborhood of the origin in $\CC^n$. However in the particular examples this Frobenius manifold structure can be extended to a larger domain when $F_0$ turns out to have big radius of convergence. This is best illustrated by the Gromov--Witten theory of $\X_2$, defining the Frobenius manifold structure on $\CC^5\times\HH$.

    \subsection{Infinitesimal version of Givental's action}      
      \textit{Upper--triangular group} consists of all elements $R = \exp(\sum_{l=1}^\infty r_lz^l)$, where:
	$$
	  r(z) = \sum_{l \ge 1} r_l z^l \in {\rm Hom}(V,V) \otimes \mathbb C[z], \quad r(z) + r(-z)^* = 0.
	$$ 
	Following Givental, for  $(r_l)^{\alpha,\beta} = (r_l)^\alpha_\sigma \eta^{\sigma, \beta}$ we define:
	\begin{equation*}
	  \begin{aligned}
	    \widehat {r_lz^l} := & -(r_l)_1^\alpha \frac{\partial}{\partial t^{l+1,\alpha}} + \sum_{d =0}^\infty t^{d, \beta} (r_l)_\beta^\alpha \frac{\partial}{\partial t^{d+l, \alpha}}
	    \\
	    & + \frac{\hbar}{2} \sum_{i+j=l-1} (-1)^{i+1} (r_l)^{\alpha,\beta} \frac{\partial^2}{\partial t^{i,\alpha} t^{j,\beta}}
	    \quad \text{for every} \quad l \ge 1.
	  \end{aligned}
	\end{equation*}
	\textit{Lower--triangular group} consists of all elements $S = \exp( \sum_{l=1}^\infty s_l z^{-l})$ where:
	$$
	  s(z) = \sum_{l \ge 1} s_l z^{-l} \in {\rm Hom}(V,V) \otimes \mathbb C[z^{-1}], \quad s(z) + s(-z)^* = 0.
	$$ 
	Following Givental, we define:
	\begin{align*}
	  \sum_{l=1}^\infty (s_lz^{-l})\widehat{\ } & :=  -(s_1)_1^\alpha \frac{\partial}{\partial t^{0,\alpha}}
	  + \frac{1}{\hbar} \sum_{d=0}^{\infty} (s_{d+2})_{1,\alpha} \, t^{d,\alpha}
	  \\ 
	  & + \sum_{ \substack{d=0\\ l=1} }^\infty
	  (s_l)_\beta^\alpha \, t^{d+l,\beta} \frac{\partial}{\partial t^{d,\alpha}}
	  + \frac{1}{2 \hbar} \sum_{ \substack{d_1,d_2 \\ \alpha_1,\alpha_2} }
	  (-1)^{d_1} (s_{d_1+d_2+1})_{\alpha_1,\alpha_2} \, t^{d_1,\alpha_1} t^{d_2,\alpha_2}.\notag
	\end{align*}
	Note that $S_0 \in \mathrm{Hom}(V,V)$ does not belong to a lower--triangular group except when $S_0 = \mathrm{Id}$. Later on we will make certain sense of the $S_0$ action too treating it exclusively.
	
	\begin{definition}
	  The action of the differential operators $\hat R := \exp( \sum_{l=1}^\infty \widehat{r_l z^l})$ and $\hat S := \exp( \sum_{l=1}^\infty (s_lz^{-l})\widehat{\ })$ on the partition function of the CohFT is called Givental's $R$--action and $S$--action respectively.
	\end{definition}
	Our interest in Givental's action comes from the following theorem.
	\begin{theorem}[cf. \cite{G,FSZ,Sh}]
	  Let $R = R(z)$ and $S = S(z)$ be some elements of the upper-- and lower--triangular groups of Givental respectively. Then the operators $\hat R$ and $\hat S$ act on the space of partition functions of the cohomological field theories.
	\end{theorem}

	\begin{remark}
	  The $R$-- and $S$--actions form two \textit{different} group actions, whose commutator is in general non--trivial.
	\end{remark}
	
    We can also consider the action of the Givental's group element $R$ on the Frobenius manifold $M$ with the potential $F(\bt)$. Doing this we act first on the CohFT partition function $\mathcal Z$ and consider the restriction to the small phase space of the (new) genus zero potential after applying the $R$--action.    
    \begin{definition}\label{definition: RonM}
      Let $M$ be a Frobenius manifold defined by a CohFT $\Lambda_{g,n}$ with the partition function $\mathcal Z$. Let $R$ be a Givental's group element. We denote by $\hat R \cdot M$ the Frobenius manifold given by the Frobenius potential 
      $$
	\hat R \cdot F := [\hbar^{-1}] \log \left( \hat R \cdot \mathcal{Z} \right) \mid_{t^{k,\alpha} = 0, \ k \ge 1}.
      $$
    \end{definition}

    \begin{remark}\label{remark: calibration}
      In this paper we work in the setting of cohomological field theories. Namely, the potential $\F_0$ has by definition no quadratic terms because $\M{0}{k}$ is only defined for $k \ge 3$ and $\F_1$ has no constant term because $\M{1}{k}$ is only defined for $k \ge 1$. This ``missing'' data could be assumed too, packed in the so--called \textit{calibration} of a CohFT. 
      At the same time, it's clear that we don't miss ``to much'' not taking the calibration into account. For example, the Frobenius manifold structure is not affected by these quadratic terms discussed.
      
      The calibration of a CohFT is affected by the $S$--action. It's clear from the formula above that the $S$--action can introduce the quadratic terms to $\F_0$. Because we only work in the setting of cohomological field theories and not keep track of the calibration of them, we will just forget about these additional quadratic terms, introduced by the $S$--action.
    \end{remark}

\section{Givental's action form of $\SLC$ action}\label{section: aaa}
For any non--zero complex number $\sigma$ consider the particular upper--triangular group element $R^\sigma(z) := \exp\left(r^\sigma(z)\right)$ fixed by
$$
  r^\sigma(z) = 
  \left(
  \begin{array}{c c c}
    0 & \dots & \sigma \\
    \vdots & 0 & \vdots \\
    0 & \dots & 0
  \end{array}
  \right) z.
$$
For any function $f(\bt)$ we denote by $\left(f(\bt)\right)_p$ the expansion of it at the point $\bt = p$.
This section is devoted to the proof of the following theorem.
\begin{theorem}\label{th:FrobeniusManifold}
  Fix some
  $A = \begin{pmatrix}
	a & b \\
	c & d
      \end{pmatrix}
    \in {\rm SL}(2, \mathbb C)
  $, $\tau \in \CC$ and a CohFT with the partition function $\Z = \exp(\sum_{g \ge 0} \hbar^{g-1} \F_g)$ and small phase space potentials $F_g(\bt)$.
  \begin{itemize}
    \item[(a)] 
    Let $F_0(\bt)$ and $F_0^A(\bt)$ be convergent in some small neighborhoods of $p_1 := (0,\dots,0, A \cdot \tau)$ and of $p_2 := (0,\dots,0,\tau)$ respectively. Denote by $M$ and $M^A$ the Frobenius manifolds defined by $F_0$ and $F_0^A$ respectively. 
    Then we have an isomorphism:
    $$
      M^A \mid_{\bt = p_2} \ \cong \ \hat R^\sigma \cdot M \mid_{\bt = p_1},
    $$
    where $\sigma := -c(c\tau+d)$.
    
    \item[(b)] Fix some $g \ge 1$. Let $F_g(\bt)$ be convergent in a small neighborhood of $p_1$. Define the function $F_g^A$:
    \begin{equation}\label{equation: FgA definition}
      \begin{aligned}
      F_g^A(\bt) := (ct^n+d)^{2-2g} & F_g \left(\frac{t^1}{ct^n+d}, \dots, \frac{t^{n-1}}{ct^n+d}, \frac{at^n + b}{ct^n+d} \right) 
      \\
      & 
      + \frac{\delta_{1,g}}{2} \log \left( \frac{ct^n + d}{c\tau +d} \right).
      \end{aligned}
    \end{equation}
    If $F_g^A$ is convergent in a small neighborhood of $p_2$, then we have:
     $$
       [\hbar]^{g-1} \log \left( \hat R^\sigma \cdot \mathcal{Z}_{p_1} \right)(\tilde \bt) = (c\tau+d)^{2-2g} \ \left( F_g^A (\bt) \right)_{p_2}
     $$   
     where $\tilde \bt = \tilde \bt(\bt)$ is given by $\tilde t^{k,\alpha} = 0$ for all $k \ge 1$ and also: 
    $$
      \tilde t^{0,1} = t^1, \quad \tilde t^{0,\alpha} = (c\tau+d) \ t^k \quad \text{for} \quad 1 < k < n, \quad \tilde t^{0,n} = (c\tau+d)^2 \ t^n.
    $$
  \end{itemize}
\end{theorem}

Part~(a) of this theorem should be considered as an extension of the result of \cite{DbSS}, where it was proved for one particular $A$, giving the Inversion transformation of Dubrovin.

Part~(b) of this theorem only assumes one fixed $g$. Namely, one doesn't need the convergence of $g=1$ potential to make statement about the $g=2$ potential. This is a surprising fact, however it can also be very well seen if one applies $\hat R^\sigma$ by hands --- no genus $g' < g$ correlators of $\Z$ contribute to the genus $g$ correlators of $\hat R^\sigma \cdot \Z$ except when $g=1$.

The parameter $\tau$ of the theorem above plays an important role due to the following reasons. First of all for an arbitrary CohFT we can't assume any clear domain of holomorphicity (while in some particular examples it appears to be indeed large --- c.f. Section~\ref{subsection: GW}). Second reason is that the action of Givental deals with the correlators --- hence assumes the partition function to be represented by a power series.

To prove the theorem we first compute the $R$--action of $R^\sigma := \exp\left(r^\sigma(z) \right)$ explicitly considering $\sigma$ as a free parameter.

\begin{proposition}\label{prop: sigma action computation}
  Let $F_g(\bt)$ be a small phase space potentials of a unital CohFT on $(V,\eta)$ with $\eta_{ij} := \delta_{i+j,n+1}$.
  Let $F_g^\sigma(\bt)$ be a genus $g$ small phase potential of $\hat R^\sigma \cdot \Z(\hbar, \bt)$. We have:
  \begin{align*}
    F_g^\sigma(\bt) \ = \ &(1-\sigma t^n)^{2-2g}F_g \left(\frac{t^1}{1- \sigma t^n},\dots, \frac{t^n}{1- \sigma t^n} \right) 
    \\
    &+ \frac{\delta_{g,1}}{2} \log (1 - \sigma t^n)
    - \delta_{g,0} \frac{\sigma}{8(1 - \sigma t^n)} \left( \sum_{k=1}^n t^k t^{n+1-k} \right)^2. 
  \end{align*}
\end{proposition}
\begin{myproof}
  Consider the partition function $\Z^\sigma := \hat R^\sigma \cdot \Z$. Considering $\sigma$ as a parameter we compute the derivative of $\Z^\sigma$ w.r.t. it. Note that $r^\sigma(z) = \sigma \ r^1(z)$.
  By the definition of $\hat R^\sigma$ we have
  $$
    \frac{\p \Z^\sigma}{\p \sigma} = \frac{\p}{\p \sigma} \left( \exp(\sigma \ \hat r^1) \cdot \Z\right) = \hat r^1 \cdot \Z^\sigma,
  $$
  and we get a PDE on $\Z^\sigma$ with the initial condition $\Z^{\sigma = 0} = \Z$. Considering the formal series expansion in $\hbar$ on the both sides we get a series of PDEs on $F_g^\sigma$. By the definition we have:
  $$
    \hat r^1 = -\frac{1}{2} \hbar \frac{\p^2}{\p t^{0,1}\p t^{0,1}} + \sum_{l \ge 0} t^{l,n} \frac{\p}{\p t^{l+1,1}}.
  $$
  Consider the two summands of the differential operator written. Because the CohFT is unital, we have $\partial F_g/\partial t^{0,1} = 0$ for $g > 0$ (by U2 axiom and degree observations). Because of this, on the small phase space the second order summand gives non--zero contribution only applied to $F_0(\bt)$. We get:
  \begin{align*}
    \p_\sigma & \F_g^\sigma\mid_{t^{d,\alpha} = 0, d \ge 1} = \left( \sum_{l \ge 0} t^{l,n} \frac{\p \F_g^\sigma}{\p t^{l+1,1}} \right)\mid_{t^{d,\alpha} = 0, d \ge 1} - \frac{\delta_{g,0}}{2} \left(\frac{\p F_0^\sigma}{\p t^{0,1}} \right)^2
    - \frac{\delta_{g,1}}{2} \frac{\p^2 F_0^\sigma}{\p t^{0,1}\p t^{0,1}}
     \\
     &= t^{0,n} \left( (2g-2) F_g^\sigma + \sum_{k=1}^n t^{0,k} \frac{\p F_g^\sigma}{\p t^{0,k}} - \frac{\delta_{g,1}}{2} \right) - \ \delta_{g,0} \frac{1}{8} \left(\sum_{k=1}^n t^{0,k}t^{0,n+1-k} \right)^2,
  \end{align*}
  where we apply Dilaton equation and use the explicit form of the pairing $\eta$ on $V$.
  In what follows to simplify the notation we use $t^k := t^{0,k}$ for all $1 \le k \le n$.
  For $g \ge 2$ the PDE above is equivalent to:
  \begin{align*}
    \p_\sigma \log F_g^\sigma &= t^n \left( (2g-2) + \sum_{k=1}^n t^k \frac{\p}{\p t^k}\log F_g^\sigma \right),
    \\
    &=  t^n \sum_{k=1}^n t^k \frac{\p}{\p t^k} \left( \log F_g^\sigma + (2g-2) \log t^n\right).
  \end{align*}
  This PDE can be solved using the method of characteristics. We get:
  $$
    (t^n)^{2g-2} F_g^\sigma \left(t^1,\dots,t^n\right) = \left(\frac{t^n}{1 - \sigma t^n}\right)^{2g-2} F_g \left(\frac{t^1}{1 - \sigma t^n}, \dots, \frac{t^n}{1 - \sigma t^n}\right).
  $$
  For $g=1$ the PDE above reads:
  $$
    \p_\sigma F_1^\sigma = t^n \sum_{k=1}^n t^k \frac{\p}{\p t^k} \left( F_1^\sigma - \frac{1}{2}\log t^n \right).
  $$
  Using the method of characteristics again we get:
  $$
    F_1^\sigma(t^1,\dots,t^n) - \frac{1}{2}\log t^n = F_1\left(\frac{t^1}{1 - \sigma t^n}, \dots, \frac{t^n}{1 - \sigma t^n} \right) - \frac{1}{2}\log \left(\frac{t^n}{1-\sigma t^n}\right).
  $$
  For $g = 0$ it's easy to see by the direct differentiation that the function $F_0^\sigma$ satisfies the PDE above with the initial condition $F_0^{\sigma = 0} = F_0$.
  This completes the proof.
\end{myproof}
\begin{remark}
  The first version of this proposition was proved via the graph counting technique introduced by \cite{DbSS}. 
  The author is grateful to the anonymous referee for proposing this short proof.
\end{remark}

\begin{myproof}[of Theorem~\ref{th:FrobeniusManifold}]
  First note that the function $F_0^\sigma$ of Proposition~\ref{prop: sigma action computation} satisfies:
  $$
    F_0^\sigma = F_0^A \quad \text{for} \quad A = \begin{pmatrix} 1 & 0 \\ -\sigma & 1\end{pmatrix}.
  $$ 
  Consider the changes of the variables $\hat t^n = (c\tau +d)^2 t^n$ and $\tilde t^n = \hat t^n + \tau$. We have:
  $$
    \frac{t^n}{1 +c(c\tau+d)t^n} + \frac{a\tau + b}{c\tau +d} = \frac{a(\hat t^n + \tau) + b}{c(\hat t^n + \tau) + d} = \frac{a\tilde t^n + b}{c\tilde t^n +d}.
  $$
  By using Proposition~\ref{prop: sigma action computation} we get the isomorphism of the Frobenius manifolds of the part (a). Part (b) follows immediately from Proposition~\ref{prop: sigma action computation} by applying the changes of the variables given.
\end{myproof}

\section{Coordinate--free form of the $\SLC$ action}\label{section: coordinateFree}

  Part~(b) of Theorem~\ref{th:FrobeniusManifold} makes use of the functions expanded at a certain points. It's natural to get rid of this special requirement.
  Let $S^c = \exp (s_1^c)$ and $S_0^A$ be Givental's $S$--actions for: 
  $$
    s_1^c := 
      \begin{pmatrix}
	0 & \dots & 0 \\
	\vdots & 0 & \vdots \\
	c & \dots & 0
      \end{pmatrix} z^{-1},
    \quad
    S_0^A := 
      \begin{pmatrix}
	1 & \dots & 0 \\
	\vdots & (c\tau+d) I_{n-2} & \vdots \\
	0 & \dots & (c\tau+d)^2
      \end{pmatrix} \quad \text{for} \quad
      A = \begin{pmatrix}
         a & b \\ c & d
        \end{pmatrix}.
  $$
  where $S_0^A$ acts by $(\hbar,\bt) \to ((c\tau+d)\hbar, \tilde \bt)$ with $\tilde t^{\alpha} = \left(S_0^A\right)_\beta^\alpha t^{\beta}$.
  On the small phase space $\hat S^c$ acts as a shift of the coordinate $t^n$ (recall that we restrict ourselves to the space of cohomological field theories and only assume the $S$--action up to quadratic terms --- see Remark~\ref{remark: calibration}). 
  Hence the statement of Theorem~\ref{th:FrobeniusManifold} rewrites as:
  \begin{equation*}
    \hat S_0^A \cdot \hat S^{\tau} \cdot F^A_g =  \hat R^\sigma  \cdot \hat S^{A \cdot \tau} \cdot F_g.
  \end{equation*}
  This equation suggests the following Givental analog of the $\SLC$--action:
  \begin{align*}
    & \hat A_G^\tau := \hat S^{-\tau} \left( \hat S_0^A \right)^{-1} \cdot  \hat R^\sigma  \cdot \hat S^{A \cdot \tau}.
  \end{align*}  
  It's easy to check by hands that $\SLC$ action defined in the analytic form of Eq.\eqref{eq:SLAction} and Eq.\eqref{equation: FgA definition} is indeed a group action. This fact is not clear on the Givental's side --- the upper--triangular and lower--triangular groups form two different groups, whose elements do not commute, and $\hat A_G^\tau$ makes use of both of them. 
  
  Compared again to the analytic action of Eq.\eqref{eq:SLAction} and Eq.\eqref{equation: FgA definition}, the action $\hat A_G^\tau$ is preferable if one works with the cohomological field theories and not just Frobenius manifolds, in particular in mirror symmetry. Let's illustrate this on the example.   
  
  \begin{example}
    For a given CohFT $\Lambda_{g,n}$ with a partition function $\Z$ let $\Z = \hat R' \cdot \Z'$ for some other CohFT partition function $\Z'$ and an upper--triangular group element $R'$. For a fixed $A \in \SLC$ let $\tau'$ be s.t. $A \cdot \tau' = 0$. Then we have:
    $$
      \hat A_G^{\tau'} \cdot \Z = \hat S^{-\tau'} \left( \hat S_0^A \right)^{-1} \cdot  \left( R^\sigma R' \right)\widehat{\ } \cdot \Z',
    $$
    where we used the fact that the upper--triangular group elements form indeed a group.
  \end{example}

  A big issue in the definition above is the composition $\hat A_G^{\tau_0} \cdot \hat B_G^{\tau_1}$ for some $A,B \in \SLC$ and $\tau_0,\tau_1 \in \CC$.
  In what follows we show that the set of all differential operators $\hat A_G^\tau$ forms a groupoid. Recall the definition of it.

  \begin{definition}
    The set $H$ is called groupoid if it's equipped with the unary operation ${}^{-1}: H \to H$ and a partial binary operation $\ast:  H\times H \rightharpoonup H$, s.t. the following conditions are satisfied for all $a,b,c \in H$.
    \begin{description}
     \item[(1)] $a \ast a^{-1}$ and $a^{-1} \ast a$ are defined for all $a \in H$,
     \item[(2)] $a \ast b$ is defined if and only if $a^{-1} \ast a = b\ast b^{-1}$,
     \item[(3)] if $a\ast b$ and $b\ast c$ are defined, then $(a\ast b)\ast c$ and $a\ast (b\ast c)$ are defined and equal,
     \item[(4)] each of $a^{-1}\ast a\ast b$, $b\ast a^{-1}\ast a$, $a\ast a^{-1}\ast b$, and $b\ast a\ast a^{-1}$ is equal to $b$ if it is defined. 
    \end{description}

  \end{definition}

    In what follows denote also by $\hat{\mathrm{Id}}_G$ the identity operator.

  \begin{proposition}
    The set $\SLC\times \CC = \{\hat A_G^\tau \ | \ A \in \SLC, \ \tau \in \CC\}$ is a groupoid with 
    \begin{description}
      \item[(i)] $\left( \hat A_G^\tau \right)^{-1} := \widehat{\left( A^{-1} \right)}^{A \cdot \tau}_G$ for all $A \in \SLC$ and $\tau \in \CC$,
      \item[(ii)] $\hat A_G^{\tau_0}\ast \hat B_G^{\tau_1} := \hat A_G^{\tau_0}\cdot \hat B_G^{\tau_1} = \widehat{\left(BA \right)}^{\tau_0}_G$ for all $A,B \in \SLC$ and $\tau_0,\tau_1$, s.t. $A\cdot \tau_0 = \tau_1$.
    \end{description}
  \end{proposition}
  \begin{myproof}
    We show the conditions (1)--(4) of the definition above. 
    
    Condition (1) is straightforward because for any $\hat A_G^{\tau}$ we have $A^{-1} \cdot ( A \cdot \tau) = \tau$. In condition (2) we should consider $\hat A_G^{\tau_0} \ast \hat B_G^{\tau_1}$. To show this condition consider the following lemma.

    \begin{lemma}
      For any $A \in \SLC$ the following formula holds:
      $$
	\hat A_G^{\tau_0} \cdot \widehat{\left( A^{-1}\right)}^{A\cdot \tau_0}_G  = \widehat{\mathrm{Id}}_G^{\tau_0}.
      $$
    \end{lemma}
    \begin{myproof}
      Let $A = \begin{pmatrix}a & b \\ c & d\end{pmatrix} \in \SLC$.
      Denote $\sigma_A := - c(c\tau_0 + d)$ and $\sigma^\prime_A := - c/(c\tau_0+d)$.
      Note that we have then $( \hat S_0^A )^{-1} \cdot \hat R^{\sigma_A} = \hat R^{\sigma_A^\prime}  \cdot ( \hat S_0^A )^{-1}$.
      For $\tau_1 = A\cdot \tau_0 $. Then we have:
      $$
	\sigma_A = -\sigma^\prime_{A^{-1}}, \ \sigma^\prime_A = -\sigma_{A^{-1}}, \quad S_0^A = \left( S_0^{A^{-1}} \right)^{-1}.
      $$
      The composition of Givental's actions reads:
      \begin{align*}
	\hat A_G \cdot \left( A^{-1}\right)\widehat{\ }_G & = \hat S^{-\tau_0} \left( \hat S_0^A \right)^{-1} \cdot  \hat R^{\sigma_A}  \cdot \hat S^{A \cdot \tau_0} \cdot \hat S^{-\tau_1} \left( \hat S_0^{A^{-1}} \right)^{-1} \cdot  \hat R^{\sigma_{A^{-1}}}  \cdot \hat S^{A^{-1} \cdot \tau_1}
	\\
	& = \hat S^{-\tau_0} \left( \hat S_0^A \right)^{-1} \cdot  \hat R^{\sigma_A}  \cdot \hat R^{\sigma^\prime_{A^{-1}}}  \cdot   \left( \hat S_0^{A^{-1}} \right)^{-1} \cdot \hat S^{\tau_0} = \mathrm{Id}.
      \end{align*}
    \end{myproof}
    
    Hence the equality $\left( \hat A_G^{\tau_0} \right)^{-1} \ast \hat A_G^{\tau_0} = \hat B_G^{\tau_1} \ast \left( \hat B_G^{\tau_1} \right)^{-1}$ is equivalent to $\widehat{\mathrm{Id}}_G^{A\cdot\tau_0} = \widehat{\mathrm{Id}}_G^{\tau_1} $ that holds if and only if $\tau_1 = A \cdot \tau_0$. This concludes Condition (2).
    
    To show Condition (3) we start with the following lemma.

    \begin{lemma}
      Let $A,B \in \SLC$ and $\tau_1 = A\cdot \tau_0$. Then $\hat A_G^{\tau_0} \cdot \hat B_G^{\tau_1} = \hat C_G^{\tau_0}$ for $C = BA$.
    \end{lemma}
    \begin{myproof}
      Let $A = \begin{pmatrix}
		a_{11} & a_{21} \\ a_{21} & a_{22}
	      \end{pmatrix}$,
      $B = \begin{pmatrix}
		b_{11} & b_{21} \\ b_{21} & b_{22}
	      \end{pmatrix}$. Then canceling out two $S$--actions we can write:
      $$
	\hat A_G \cdot \hat B_G = \hat S^{- \tau_0} \left( \hat S_0^A \right)^{-1} \hat R^{\sigma_A} \left(\hat S_0^B \right)^{-1} \hat R^{\sigma_B} \hat S^{B\tau_1}.
      $$
      Commuting $\left(\hat S_0^B \right)^{-1}$ with $\hat R^{\sigma_A}$ as in the proof of the lemma above and observing that $\hat S_0^B \hat S_0^A = \hat S_0^{C}$ we get:
      $$
	\hat A_G \cdot \hat B_G = \hat S^{- \tau_0} \left( \hat S_0^C \right)^{-1} \hat R^{\tilde \sigma + \sigma_B} \hat S^{C\tau_0},
      $$
      for $\tilde \sigma = \sigma_A (b_{21} \tau_1 + b_{22})^2$. Some easy but long computations give the needed equality $\tilde \sigma +  \sigma_B = \sigma_C$.
    \end{myproof}

    For any $A,B,C \in \SLC$ and $\tau_0,\tau_1,\tau_2 \in \CC$ the products $\hat A_G^{\tau_0} \ast \hat B_G^{\tau_1}$ and $\hat B_G^{\tau_1} \ast C_G^{\tau_2}$ are defined if and only if $\tau_1 = A\cdot \tau_0$ and $\tau_2 = B\cdot \tau_1$. Then $(\hat A_G^{\tau_0} \ast \hat B_G^{\tau_1}) \ast \hat C_G^{\tau_2} = \widehat{BA}_g^{\tau_0} \ast \hat C_G^{\tau_2}$ is defined if and only if $\tau_2 = BA\cdot \tau_0 = B \cdot (A \cdot \tau_0) = B \cdot \tau_1$. This concludes Condition (3).
    
    Proof of Condition (4) is straightforward by using the two lemmas above.
    \end{myproof}

    In order to consider the action of $\hat A_G^\tau$ on the space of partition functions of the cohomological field theories one has to resolve the following problem. The $S$--action applies the shift of the variables to an infinite series. This operation can produce divergent power series and one can't apply the consequent $R$--action then. Moreover it could happen that for some $A,B \in \SLC$, $\tau_0,\tau_1 \in \CC$ and a CohFT partition function $\Z$ the action $\widehat{BA}_G^{\tau_0} \cdot \Z$ is defined, but $\hat B_G^{\tau_1} \cdot \left( \hat A_G^{\tau_0} \cdot \Z \right)$ --- not (exactly due to the convergence issues). Hence one could only consider the action of some subgroupoid of all $\hat A_G^{\tau_0}$ on the certain subset of the full Fock space of partition functions, satisfying some convergence properties. This subject is beyond the purposes of this work and we postpone it to the future.
    
\section{$\SLC$ action in singularity theory}\label{section: totalAncestorPotential}

  This section is devoted to the connection of the $\SLC$ action developed above to the action on the total ancestor potential of a simple--elliptic singularity, that changes a primitive form.   
    \subsection{Frobenius manifold Saito--Givental theory of a hypersurface singularity} Let $W(\bx) \in \mathcal{O}_{\CC^N,0}$ define an isolated singularity at $0 \in \mathbb C^N$ and $\phi_i(\bx) \in \mathcal O_{\mathbb C^N,0}$ form a basis, generating the Milnor algebra of the singularity: $\langle \phi_1, \dots, \phi_\mu \rangle = \mathcal O_{\mathbb C^N, 0} / \left\langle \partial_{x_1} W, \dots, \partial_{x_N}W \right\rangle$. The \textit{unfolding} of the singularity is the following function:
    \[
      W(\bx, \bs) := W(\bx) + \sum_{i=1}^\mu \phi_i(\bx) s^i,
    \]
    where $\mu$ is the Milnor number of $W(\bx)$ and the coordinates $s^1,\dots,s^\mu$ belong to the \textit{base space} $\mathcal S$ of the unfolding. 
    The sheaf $\T_{\mathcal S}$ can be endowed with an algebra structure. 
    Fixing a volume form $\omega = f(\bx, \bs) dx_1 \dots dx_N$ one can introduce a non--degenerate bilinear form $\eta$ on $\T_{\mathcal S}$ --- the residue pairing.     
    Special choice of the volume form is needed to make $\eta$ flat and get the Frobenius manifold.
    
    \begin{theorem}[\cite{S}]
      There is a choice of the volume form $\omega$ such that the pairing $\eta$ is flat and defines together with $c_{ij}^k(\bs)$ a structure of a Frobenius manifold on $\mathcal S$.
    \end{theorem}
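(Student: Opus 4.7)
The plan is to invoke K.~Saito's theory of primitive forms. First, I would realize the relative cohomology of the unfolding $W(\bx,\bs)$ via the \emph{Brieskorn lattice} $\mathcal{H}$: working with the twisted differential $zd_{\bx} + dW \wedge \cdot$ on $\O_{\CC^N \times \S}[[z]]$-valued relative top forms modulo exact ones, one obtains a free $\O_{\S}[[z]]$-module of rank $\mu$ carrying a Gauss--Manin connection $\nabla$ of Poincar\'e rank one in $z$ and K.~Saito's \emph{higher residue pairing} $K: \mathcal{H} \otimes \mathcal{H} \to z^N \O_{\S}[[z]]$. The sought volume form $\omega$ is encoded by a distinguished section $\zeta \in \mathcal{H}$.

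Next, I would translate the desired conclusion into Saito's axioms for a \emph{primitive form} $\zeta$: (i) $\zeta$ generates $\mathcal{H}/z\mathcal{H}$ as an $\O_{\S}$-module, so that $v \mapsto \nabla_v \zeta \bmod z\mathcal{H}$ identifies $\T_{\S}$ with $\mathcal{H}/z\mathcal{H}$; (ii) the leading term of $K$ evaluated on $\zeta$ reproduces the prescribed pairing $\eta_{ij}$ on $\T_{\S}$; (iii) $\nabla \zeta$ has at most a simple pole at $z = 0$; (iv) a homogeneity condition with respect to an appropriate Euler field. Flatness of $\eta$ and the Frobenius algebra axioms then follow formally: the commutative multiplication on $\mathcal{H}/z\mathcal{H}$ coming from its $\O_{\S}$-module structure transports to $\T_{\S}$ via the isomorphism above, the associativity is built in, and the residue pairing of $K$ is automatically flat for the leading symbol of $\nabla$ along $z = 0$, which is exactly the Levi-Civita connection of $\eta$ in flat coordinates produced by integrating $\nabla_v \zeta$.

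The main obstacle is the \emph{existence} of such a $\zeta$. For quasi-homogeneous $W$ one can take $\zeta = d\bx$ (up to a normalizing factor) and verify the axioms by direct computation using the weighted grading of the Milnor algebra; this is the setting of Saito's original paper \cite{S}. In the general isolated hypersurface case one needs the deeper input of M.~Saito's mixed Hodge module theory: the crucial step is the construction of an \emph{opposite filtration} to the Hodge filtration on the vanishing cohomology of $W$, which yields a splitting of $\mathcal{H}$ compatible with both $\nabla$ and $K$ and hence a primitive form. Once $\zeta$ is fixed, the passage to flat coordinates via oscillating integrals $\int_{\Gamma(\bs)} e^{W/z} \zeta$ and the identification of $c_{ijk}$ with third derivatives of the resulting potential are formal consequences, completing the Frobenius manifold structure on $\S$.
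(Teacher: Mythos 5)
The paper does not prove this statement at all --- it is quoted from K.~Saito \cite{S}, with the existence of the primitive form explicitly deferred to \cite{SM} and \cite{H} --- and your outline (Brieskorn lattice with its Gauss--Manin connection and higher residue pairing, Saito's primitive-form axioms, existence in general via an opposite filtration in the sense of M.~Saito, then flat coordinates and the potential via oscillatory integrals) is exactly the argument contained in those references, so it is essentially the same route the paper relies on. The one caution is that for the simple elliptic singularities this paper actually uses, the ``normalizing factor'' making the standard volume form primitive is not a constant but the period $\pi_A(\sigma)$ solving a Picard--Fuchs equation (the paper later records $\zeta = d^3\bx/\pi_A(\sigma)$), so in this borderline quasi-homogeneous case the verification genuinely requires the period map and not merely the weighted grading of the Milnor algebra.
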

    
    Such volume form is called a \textit{primitive form} of Saito and its existence is another complicated question.
    Its choice is generally not unique. 

    By considering the asymptotic expansions of a certain oscillatory integrals near some $\bs \in \mathcal{S}$ one can construct an upper--triangular group element $R_\bs$ and also a partition function $\mathcal{A}_{\bs,\zeta} (\hbar,\bt)$, called a \textit{total ancestor potential of the singularity $W(\bx)$}. This data depends also on $\zeta$ and the choice of $\bs \in \mathcal{S}$. Comparing to the case of the CohFTs discussed in Section~\ref{section: GiventalsAction} here we get a family of partition functions, depending on this data.     
    
    The function
    $$
      F_{\bs,\zeta}(\bt) := \left[ \hbar^{-1} \right] \ \log \mathcal{A}_{\bs,\zeta} (\hbar, \bt) |_{\bt^{d,p} = 0, \ d \ge 1}
    $$
    is then the Frobenius manifold potential of the theorem above. It defines a Frobenius manifold structure on $\mathcal S$ in a neighborhood of the point $\bs$ with the primitive form $\zeta$. We will denote it by $M_{\zeta,\bs}$.

    \subsection{Choice of the primitive form}
    Let $W(\bx) = W_\sigma(\bx)$ be one of the following polynomials, depending also on a complex parameter $\sigma$:
    \begin{align*}
      & x_1^3 + x_2^3 + x_3^3 + \sigma x_1x_2x_3,
      \\
      & x_1^2x_3 + x_1x_2^3 + x_3^2 + \sigma x_1x_2x_3,
      \\
      & x_1^3x_3 + x_2^3 + x_3^2 + \sigma x_1x_2x_3.
    \end{align*}
    These polynomials define the so--called simple--elliptic singularities. Their zero sets in certain weighted projective spaces define the families of the elliptic curves $E_\sigma$.   
    The primitive form $\zeta$ for the simple--elliptic singularities was found by \cite[Paragraph 3]{S}: 
    $$
      \zeta = \frac{d^3\bx}{\pi_A(\sigma)},
    $$
    where $\pi_A(\sigma)$ is a solution to the Picard-Fuchs equation of $E_\sigma$. Different solutions $\pi_A(\sigma)$ give different primitive forms and (generally) different Frobenius manifold structures. 

    This property was explored differently in \cite{MR} and \cite{BT} to consider different primitive forms of the fixed simple elliptic singularity. In particular, the second approach was to use the same $\SLC$--action as we described above, but with a particular matrix $A^{(\tau_0,\omega_0)}$ for all fixed $\tau_0 \in \HH$ and $\omega_0 \in \CC^*$:
    $$
      A^{(\tau_0, \omega_0)} := 
      \begin{pmatrix}
	  \dfrac{\sqrt{-1}\bar{\tau}_0}{2\omega_0{\rm Im}(\tau_0)} & \omega_0 \tau_0
	  \\
	  \dfrac{\sqrt{-1}}{2\omega_0{\rm Im}(\tau_0)} & \omega_0
	\end{pmatrix}.
    $$
    The $\SLC$--action of such matrix on the space of Frobenius manifolds as in Example~1 of Introduction can be considered as a model for the primitive form change (see Section~2.5, \cite{BT}).
    \begin{remark}
      It was a $\mathrm{GL}(2,\CC)$ matrix, used there, however with a fixed determinant $1/(2 \pi \sqrt{-1})$. We rescale the matrix here in order to get a $\SLC$--action. This rescaling is equivalent to the variable rescaling.
    \end{remark}

  We are going to compare it with the completely different approach of Milanov--Ruan.
    
    \subsubsection{Approach of Milanov--Ruan}
    From now on fix $n := \mu$, the Milnor number of the singularity.
    Let $e_1, \dots, e_n$ be the basis vectors of $T_{\bs}\mathcal{S}$ at ${\bs} = (s^1,\dots,s^n)$.
    For any 
    $A \in \SLC$ let the linear operator $J: T_\bs \mathcal{S} \to T_\bs \mathcal{S}$ be defined by:
    $$
      J (e_1) = e_1, \ J(e_n) = (cs^n + d)^2 e_n, \ J(e_p) = (cs^n + d) e_p, \quad 1 < p < n, \quad A = \begin{pmatrix}
	  a & b
	  \\
	  c & d
	\end{pmatrix}.
    $$
    Let $J$ act on $\bt$ by:
    $t^{d,j} \rightarrow \sum_{i=1}^n J_i^j t^{d,i}$ for all $d \ge 0$.
    Consider Givental's upper--triangular group element $X_{\textbf s}(z)$ defined by:
    $$
      \left( X_\bs(z) \right)_i^j = \delta_i^j - z \frac{c}{c s^n + d} \delta_{i,n}\delta^{j,1},
    $$
  
    \begin{theorem}[Theorem~4.4 in \cite{MR}]\label{theorem: Milanov-Ruan}
      Consider two total ancestor potentials of the same simple--elliptic singularity~\footnote{Only one particular case of simple elliptic singularity $P_8$ was considered in \cite{MR}. However the technique used is extended in a straightforward way to all other cases too.} 
      with the primitive forms $\zeta_1$ and $\zeta_2$. Then there is $A \in {\rm SL}(2,\mathbb C)$ such that the corresponding total ancestor potentials are connected by the following transformation\footnote{The statement of theorem of Milanov--Ruan has the operator $\hat X_\bs$ rather than the inverse of it, however they also consider a bit different definition of the Givental's action --- the connection to our is by taking the inverse $R$--action.}:
      $$
	\mathcal A_{\zeta_1, A \cdot \bs}( \hbar, \bt) = \left( \left(\hat X_\bs\right)^{-1} \cdot \mathcal A_{\zeta_2, \bs}\right) \left( (c \tau + d)^2 \hbar, J \bt \right),
      $$
      where 
      $\bs = (0, \dots, 0, \tau)$ and $A \cdot \bs = (0, \dots, 0, A \cdot \tau)$.
    \end{theorem}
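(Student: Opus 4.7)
The plan is to compare the Givental data $(\Psi_\bs, R_\bs, U_\bs)$ attached to the two primitive forms $\zeta_1$ and $\zeta_2$, track how each factor changes under a primitive form rescaling parameterized by $A\in\mathrm{SL}(2,\mathbb{C})$, and then assemble the result in the symplectic form of Proposition~\ref{proposition: action in symplectic setting} to recognize the $\hat X_\bs$-action.

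First I would make precise the way $\zeta_1$ and $\zeta_2$ differ. For a simple elliptic singularity the primitive form is $\zeta = d^3\bx/\pi_A(\sigma)$, so choosing a different $A_\sigma \in H_1(E_\sigma,\mathbb{C})$ by an $\mathrm{SL}(2,\mathbb{Z})$ (or $\mathrm{SL}(2,\mathbb{C})$ in the analytic setting) transformation rescales $\pi_A$ by a factor $(c\tau+d)$, where $\tau$ is the modular parameter. Consequently the flat coordinate $s^n$ transforms as $s^n \mapsto A\cdot s^n$, the other flat coordinates $s^p$ ($1<p<n$) and the pairing scale by $(c\tau+d)$ and $(c\tau+d)^2$ respectively — this is precisely the operator $J$ — and the critical values scale as $u^k \mapsto (c\tau+d)^2 u^k$. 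This accounts for the rescalings $\hbar \mapsto (c\tau+d)^2\hbar$ and $\bq \mapsto J\bq$ in the statement; the genus-zero part is Theorem~\ref{th:FrobeniusManifold}(a) together with Corollary~\ref{corollary: sigma-sigmaPrime}.

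Next I would pass to the oscillatory integrals $\int_{\mathcal{A}_k} e^{W(\bx,\bs)/z}\zeta$, whose stationary phase asymptotic defines $R_\bs$. Because $\zeta_1 = \zeta_2/(c\tau+d)$ up to a cohomology shift involving the second basis element of periods, the asymptotic expansions with respect to $\zeta_1$ and $\zeta_2$ differ by a controllable multiplicative factor and an additional $z$-linear term coming from the shift of the dual period. Explicitly one gets
\begin{equation*}
R_{A\cdot\bs}^{\zeta_1}(z) \;=\; \Psi_{A\cdot\bs}\,J^{-1}\,X_{\bs}(z)\,\Psi_{\bs}^{-1}\, R_\bs^{\zeta_2}(z),
\end{equation*}
where the matrix $X_\bs(z)=1-z\,\frac{c}{cs^n+d}E_{n,1}$ is precisely the one in the statement; here the $E_{n,1}$ structure is forced by the fact that only the $s^n$-direction is non-trivially affected by the modular transformation and only the unit direction $e_1$ receives the linear correction (this being the infinitesimal generator of the primitive form shift). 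The same computation shows $\Psi_{A\cdot\bs} = \Psi_\bs\,J^{-1}$ on the canonical frame and $U_{A\cdot\bs} = (c\tau+d)^2\,U_\bs$.

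Finally, I would plug these relations into Givental's formula
\begin{equation*}
\mathcal{A}_{\zeta,\bs} = \hat\Psi_\bs\,\hat R_\bs\,\prod_{p=1}^\mu \mathcal{D}_{KdV}^{\{p\}}(\hbar\Delta_p,\,\textbf u^p\sqrt{\Delta_p})
\end{equation*}
on both sides and use the quasi-homogeneity of $\mathcal{D}_{KdV}$ (scaling of $\hbar$ and $\textbf u$ by conjugate factors) to absorb the $\Delta$-rescalings into the $\hbar\mapsto(c\tau+d)^2\hbar$ change; the intertwiner $\hat\Psi\hat J \hat\Psi^{-1}$ then reads off as the linear change of variables $\bq\mapsto J\bq$, and the remaining factor $\hat X_\bs$ appears as the symplectic/Weyl quantization of $X_\bs(z)$ applied to $\mathcal{A}_{\zeta_2,\bs}$.

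The main obstacle is the middle step: justifying that the change of primitive form really produces $R_{A\cdot \bs}^{\zeta_1}=\Psi\,J^{-1}X_\bs(z)\Psi^{-1}R_\bs^{\zeta_2}$ with the specific $E_{n,1}$ shape for $X_\bs$. This requires a careful analysis of the stationary phase expansion along a family of Lefschetz thimbles when the integrand's volume form is twisted by a period ratio that itself depends analytically on $\bs$, and controlling the subleading $z$-order correction. Once this identification is established, the rest of the argument is a bookkeeping exercise in the Givental formalism together with the genus-zero reduction already handled by Theorem~\ref{th:FrobeniusManifold}.
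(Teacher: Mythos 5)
This statement is not proved in the paper at all: it is quoted as Theorem~4.4 of \cite{MR} (with a footnote that \cite{MR} treats only $P_8$ and that the method extends to the other simple elliptic singularities), and the paper's own contribution, Theorem~\ref{theorem: MR-BT equivalence}, is only to identify that known transformation with the $\SL2C$ action \eqref{eq:SLAction}. So your proposal has to be judged as an independent reconstruction of the Milanov--Ruan argument, not compared with an in-paper proof.

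Judged that way, your strategy is the right one --- track how the Saito/Givental data $(\eta,\Psi_\bs,U,R_\bs)$ change when the period $\pi_A(\sigma)$ defining $\zeta$ is altered by a change of the cycle, then feed this into $\mathcal A_\bs=\hat\Psi_\bs\hat R_\bs\prod_p\mathcal D_{KdV}^{\{p\}}$ --- but it contains a genuine gap exactly where all of the work in \cite{MR} lies. The identity $R^{\zeta_1}_{A\cdot\bs}(z)=\Psi_{A\cdot\bs}\,J^{-1}\,X_\bs(z)\,\Psi_\bs^{-1}\,R^{\zeta_2}_\bs(z)$, with $\left(X_\bs(z)\right)_i^j=\delta_i^j-z\,\tfrac{c}{cs^n+d}\,\delta_{i,n}\delta^{j,1}$, is asserted rather than derived: nothing in the sketch shows why the correction is linear in $z$, why it is rank one and sits precisely in the $(e_n,e_1)$ slot, or why its coefficient is $c/(cs^n+d)$. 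In \cite{MR} this comes from an analysis of the solutions of the Picard--Fuchs equation in the marginal direction and of how the stationary-phase expansion of $\int_{\mathcal A_k}e^{W(\bx,\bs)/z}\zeta$ reacts when $\zeta$ is divided by an $\bs$-dependent period; you yourself flag this as ``the main obstacle'', which means the central step of the theorem is being assumed, not proved. Two secondary points also need care: the critical values $u^k$ of $W(\bx,\bs)$ do not depend on the primitive form, so the scaling $u^k\mapsto(c\tau+d)^2u^k$ must be produced by the induced $J$-rescaling of the base point and coordinates rather than by the form itself, and the claimed relations $\Psi_{A\cdot\bs}=\Psi_\bs J^{-1}$ and $U_{A\cdot\bs}=(c\tau+d)^2U_\bs$ require the same period analysis. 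Granting the displayed $R$-matrix identity, the remaining assembly in Givental's formalism is indeed routine bookkeeping --- but that identity essentially \emph{is} the theorem.
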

    In the theorem above it appears to be very hard to present particular $\SLC$ matrix $A$.

    \subsection{Equivalence of the approaches.}
    Comparing the total ancestor potential formula of Milanov--Ruan with the $\SLC$ action developed in this paper we get the theorem.
    \begin{theorem}\label{theorem: MR-BT equivalence}
      Let the two total ancestor potentials $\mathcal{A}_{\zeta_1,\bs}$, $\mathcal{A}_{\zeta_2,\bs^\prime}$ of the same simple--elliptic singularity $W(\bx)$ be connected by a symplectic transformation of Milanov--Ruan. Then the potentials of the Frobenius manifolds $M_{\zeta_1,\bs}$, $M_{\zeta_2,\bs'}$ corresponding to these total ancestor potentials are connected by the formula \eqref{eq:SLAction} for some $A \in \SLC$.
    \end{theorem}
    \begin{myproof}
      Consider particular pair of primitive forms $\zeta_1$ and $\zeta_2$ with the corresponding $\SLC$--matrix $A$ as in Theorem~\ref{theorem: Milanov-Ruan}. We show that the transformation of Milanov-Ruan acts in the same way as Givental's form of our $\SLC$ action.

      Denote by $\hat J$ the rescaling action $(\hbar, \bt) \to ((c\tau+d)^2 \hbar, J \bt)$. Note that its action is the same as $S_0^A$ action of Section~\ref{section: coordinateFree}. Recall $R^\sigma$--action of Section~\ref{section: aaa}. Because of its particular form we have:
      $$
	R^\sigma(z) = \exp(r^\sigma z) = 1 + r^\sigma z.
      $$
      We know that for $\sigma := -c(c \tau + d)$ the following equality holds:
      $$
	\hat R^\sigma \cdot \hat S_0^A = \hat J \cdot \hat X_\bs.
      $$
      Hence the transformation of Milanov--Ruan is represented by the $\SLC$ action we consider. We only have to compare the points of the Frobenius manifolds on both sides. Using Theorem~\ref{theorem: Milanov-Ruan} and Theorem~\ref{th:FrobeniusManifold} we write:
      $$
	F_{\zeta_1, A \cdot \tau} = \left( \hat R^\sigma \right)^{-1} \hat S_0^A F_{\zeta_2, \tau} =  \hat R^{-\sigma} \hat S_0^A F_{\zeta_2, \tau}
	=  \hat S_0^A \left( \hat X_\bs \right)^{-1} F_{\zeta_2, \tau}.
      $$
      This completes proof of the theorem.
    \end{myproof}
    \begin{corollary}
      The approaches of \cite{MR} and \cite{BT} for the action changing the primitive form of a simple--elliptic singularity coincide.
    \end{corollary}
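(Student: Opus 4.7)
The strategy is to match the two sides of the Milanov--Ruan identity term by term against Givental's form of the analytic $\SL2C$ action, using Corollary~\ref{corollary: sigma-sigmaPrime} as the bridge between the analytic $F^A$ and the composition $\hat R^\sigma \hat S_0^A$ applied to $F$. First I would observe that the rescaling operator $\hat J$ appearing in Theorem~\ref{theorem: Milanov-Ruan} acts on $(\hbar,\bq)$ by exactly the same rule $(\hbar,\bq)\mapsto ((c\tau+d)^2\hbar, S_0^A\bq)$ that was introduced for $\hat S_0^A$ in \eqref{equation: S0A in higher genera}, so these two operators can be identified outright.

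Next I would identify the upper--triangular operator $\hat X_\bs$ with one of the $\hat R$-actions produced by Proposition~\ref{prop:Action via graphs}. Since $r^\sigma$ is nilpotent one has $R^\sigma(z)=1+r^\sigma z$, so $R^\sigma(z)$ and $X_\bs(z)$ are both ``rank one'' perturbations of the identity of degree one in $z$: the nonzero entry of $r^\sigma$ lives in position $(1,n)$ with value $\sigma=-c(c\tau+d)$, while the nonzero entry of $X_\bs(z)-1$ lives in the transposed position $(n,1)$ with value $-cz/(c\tau+d)$. Proposition~\ref{proposition: two actions} tells us that the symplectic form of Givental used in \cite{MR} differs from the infinitesimal form by a sign on the exponent (equivalently, by taking the inverse operator), and the star-duality of Givental's conditions converts the $(n,1)$ entry into a $(1,n)$ entry. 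Carrying out this bookkeeping should show that the action of $\hat X_\bs$ as applied in Theorem~\ref{theorem: Milanov-Ruan} coincides precisely with the infinitesimal action of $\hat R^{\sigma'}$ for $\sigma'=-c/(c\tau+d)$; this is the genus--zero instance of Corollary~\ref{corollary: sigma-sigmaPrime higher genera}.

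With both operators identified, I would then assemble the chain of equalities. The Milanov--Ruan formula reads, after the identifications above,
\[
  \mathcal{A}_{\zeta_1, A\cdot\bs}(\hbar,\bq) = \hat S_0^A\cdot \hat R^{\sigma'}\cdot \mathcal{A}_{\zeta_2,\bs}(\hbar,\bq),
\]
so the genus--zero, small phase space specialization becomes
\[
  F_{\zeta_1,A\cdot\tau} = \hat S_0^A\cdot \hat R^{\sigma'}\cdot F_{\zeta_2,\tau}.
\]
On the other hand, applying Corollary~\ref{corollary: sigma-sigmaPrime} (in its second form) to the Frobenius potential $F$ of $W(\bx)$ at the primitive form $\zeta_2$ gives $F^A_\tau = \hat R^{\sigma'}(\hat S_0^A)^{-1}F_{A\cdot\tau}$, and rearranging yields $\hat S_0^A\hat R^{\sigma'}F_\tau$ on one side matching an $F^A$-expansion on the other. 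Comparing these two identities shows that the Milanov--Ruan change of primitive form enacts, on the underlying Frobenius potential, precisely the formula \eqref{eq:SLAction}.

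The main obstacle I expect is the careful bookkeeping of conventions: keeping track of (i) the sign flip between the symplectic and infinitesimal forms of Givental (Proposition~\ref{proposition: two actions}), (ii) the transposition that converts a lower--left entry into an upper--right one under $\eta$-duality, and (iii) the distinction between the base point $\tau$ and the image $A\cdot\tau$ on the two sides of the isomorphism, together with the rescaling of $\hbar$ by $(c\tau+d)^2$. Once these are correctly aligned, the identification of operators becomes direct and the equivalence follows from Corollary~\ref{corollary: sigma-sigmaPrime} with no further geometric input.
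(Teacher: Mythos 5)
Your plan follows the same route as the paper's own argument: identify the Milanov--Ruan rescaling $\hat J$ with $\hat S_0^A$ of \eqref{equation: S0A in higher genera}, convert the symplectically applied $\hat X_\bs$ into an infinitesimal upper--triangular action using the nilpotency $R^\sigma(z)=1+r^\sigma z$ together with Proposition~\ref{proposition: two actions}, commute it past $\hat S_0^A$ via Corollary~\ref{corollary: sigma-sigmaPrime}, and then compare with the analytic expansion of $F^A$. So the strategy is not in question; the problem is the sign you commit to in the key identification. Since $X_\bs(z)=\exp(\sigma' z\, r^1)$ with $\sigma'=-c/(c\tau+d)$, Proposition~\ref{proposition: two actions} says that its \emph{symplectic} application (which is how it enters Theorem~\ref{theorem: Milanov-Ruan}) equals the \emph{infinitesimal} action of the inverse, i.e.\ of $\hat R^{-\sigma'}$, not of $\hat R^{\sigma'}$ as you assert. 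With your sign the assembly does not close: substituting your version of the Milanov--Ruan identity, $F_{\zeta_1,A\cdot\tau}=\hat S_0^A\hat R^{\sigma'}F_{\zeta_2,\tau}$, into the second identity of Corollary~\ref{corollary: sigma-sigmaPrime} gives $(F_{\zeta_1})^A_\tau=\hat R^{\sigma'}(\hat S_0^A)^{-1}\hat S_0^A\hat R^{\sigma'}F_{\zeta_2,\tau}=\hat R^{2\sigma'}F_{\zeta_2,\tau}$, leaving an unwanted factor instead of the desired equality $(F_{\zeta_1})^A_\tau=F_{\zeta_2,\tau}$. With the corrected sign one recovers exactly the paper's chain $F_{\zeta_1,A\cdot\tau}=\hat R^{-\sigma}\hat S_0^A F_{\zeta_2,\tau}=\hat S_0^A\hat X_\bs F_{\zeta_2,\tau}$, and the equivalence with \eqref{eq:SLAction} follows.

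Two smaller points of bookkeeping. First, no genuine transposition argument is needed: in the conventions of the paper the entry $\delta_{i,n}\delta^{j,1}$ of $X_\bs$ already encodes the operator $e_n\mapsto -\tfrac{cz}{c\tau+d}\,e_1$, which sits in the same position as $r^{\sigma'}$; moreover this matrix unit is self--adjoint with respect to the antidiagonal $\eta$, so $\eta$--duality would not move it anyway. Second, attributing the identification of $\hat X_\bs$ to ``the genus--zero instance of Corollary~\ref{corollary: sigma-sigmaPrime higher genera}'' misplaces the logic: that corollary supplies the commutation of $\hat R^\sigma$ with $\hat S_0^A$ (trading $\sigma$ for $\sigma'$), which is the \emph{second} step of the argument, while the identification of $\hat X_\bs$ itself rests only on Proposition~\ref{proposition: two actions} and the fact that $r^\sigma$ is nilpotent. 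Since you explicitly deferred the sign alignment as ``bookkeeping'' and then resolved it the wrong way, the argument as written has a genuine, though repairable, gap.
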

    This result shows that taking all the possible primitive forms for a fixed simple--elliptic singularity there are only two ``geometrically'' different Frobenius manifolds --- $F_{\zeta_1}$ and $\hat I \cdot F_{\zeta_1}$. All the other Frobenius manifolds fixed by other choices of the primitive forms are obtained from these two by taking linear changes of the variables.



\begin{thebibliography}{9}

\bibitem{B_hf}
Basalaev, A.
\newblock {Orbifold GW theory as the Hurwitz--Frobenius submanifold}.
\newblock {\em Journal of Geometry and Physics}, 77:30--42 (2014).

\bibitem{B_o}
Basalaev, A.
\newblock {$\SLC$--action on cohomological field theories and Gromov--Witten theory of elliptic orbifolds}.
\newblock {\em Oberwolfach reports}, 22/2015 (2015).

\bibitem{B6dim}
Basalaev, A.
\newblock {6--dimensional FJRW theories of the simple--elliptic singularities}.
\newblock {\em arXiv preprint: 1610.07428 (2016)}.

\bibitem{B_hp}
Basalaev, A.
\newblock {Homepage}.
\newblock \href{http://basalaev.wordpress.com}{http://basalaev.wordpress.com}.

\bibitem{BP}
Basalaev, A., Priddis, N.
\newblock {Givental-type reconstruction at a non-semisimple point }.
\newblock {\em arXiv preprint: 1605.07862v2, accepted by Mich. Math. J. (2017)}.

\bibitem{BT}
Basalaev, A., Takahashi, A.
\newblock {On rational Frobenius Manifolds of rank three with symmetries}.
\newblock {\em J. Geom. Phys.}, (77):30--42 (2014).
 
\bibitem{CIR}
Chiodo, A., Iritani, H., Ruan, Y.
\newblock {Landau-Ginzburg/Calabi-Yau correspondence, global mirror symmetry and Orlov equivalence}.
\newblock {\em Publ.math.IHES}, (2014):119--127 (2014).

\bibitem{D}
Dubrovin, B.
\newblock {Geometry of 2d topological field theories}.
\newblock In {\em Lecture Notes in Math}, pages 120--348. Springer (1996).

\bibitem{DZ}
Dubrovin, B., Zhang, Y.
\newblock {Bihamiltonian Hierarchies in 2D Topological Field Theory At One-Loop
  Approximation}.
\newblock {\em arXiv preprint: hep-th/9712232 (2001)}.

\bibitem{DbSS}
Dunin--Barkowski, P., Shadrin, S., and Spitz, L.
\newblock {Givental Graphs and Inversion Symmetry}.
\newblock {\em Letters in Mathematical Physics}, 103(5):533--557 (2013).

\bibitem{FSZ}
Faber, C., Shadrin, S., and Zvonkine, D.
\newblock {Tautological relations and the $r$-spin Witten conjecture}.
\newblock {\em Ann. Sci. Ec. Norm. Super.}, 4(43):621--658 (2006).

\bibitem{G}
Givental, A.
\newblock {Gromov--Witten invariants and quantization of quadratic hamiltonians}.
\newblock {\em Mosc. Math. J.}, 1(4):1--17 (2001).

\bibitem{H}
Hertling, C.
\newblock {\em {Frobenius Manifolds and Moduli Spaces for Singularities}}.
\newblock Cambridge University Press, Cambridge, cambridge edition (2002).

\bibitem{KM}
Kontsevich, M., Manin, Y.
\newblock {Gromov--Witten classes, quantum cohomology, and enumerative   geometry}.
\newblock {\em Comm. Math. Phys.}, 164(3):525--562 (1994).

\bibitem{L}
Lee, Y.P.
\newblock {Notes on axiomatic Gromov--Witten theory and applications}.
\newblock {\em Proceedings Of Symposia In Pure Mathematics}, 80 (2005).

\bibitem{M}
Milanov, T.
\newblock {Analyticity of the total ancestor potential in singularity theory}.
\newblock {\em Adv. in Math.}, 255: 217--241 (2014).

\bibitem{MR}
Milanov, T., Ruan, Y.
\newblock {Gromov--Witten theory of elliptic orbifold $\mathbb{P}^{1}$ and quasi--modular forms}.
\newblock {\em arXiv preprint:1106.2321} (2011).

\bibitem{MS2}
Milanov, T., Shen, Y.
\newblock {The modular group for the total ancestor potential of Fermat simple elliptic singularities}.
\newblock {\em Commun. Number Theory Phys}, 329--368 (2014).

\bibitem{PPZ}
Pandharipande, R., Pixton, A., Zvonkine, D.
\newblock {Relations on $\M{g}{n}$ via 3--spin structures}.
\newblock {\em Amer. Math. Soc.}, (28):279--309 (2015).

\bibitem{S} 
Saito, K.
\newblock {Period mapping associated to a primitive form}.
\newblock {\em Publ RIMS, Kyoto Univ.}, (19):1231--1264 (1983).

\bibitem{ST}
Satake, I., Takahashi, A.
\newblock {Gromov--Witten invariants for mirror orbifolds of simple elliptic singularities}.
\newblock {\em Ann. Inst. Fourier}, 61:2885--2907 (2011).

\bibitem{Sh}
Shadrin, S.
\newblock {BCOV theory via Givental group action on cohomological field theories}.
\newblock {\em Mosc. Math. J.}, 9:411--429 (2009).

\bibitem{SZ}
Shen, Y., Zhou, J.
\newblock {Ramanujan Identities and Quasi-Modularity in Gromov-Witten Theory}.
\newblock {\em arXiv preprint: 1411.2078v1} (2014).

\bibitem{T}
Teleman, C.
\newblock {The structure of 2D semi-simple field theories}.
\newblock {\em Invent. Math.}, (189):525--588 (2012).

\bibitem{Z} 
D. Zagier, 
\newblock {Elliptic modular forms and their applications}.
\newblock In {\em 1-2-3 Modul. forms}, Springer Universitext, (2008), pp. 1--103.

\end{thebibliography}
\end{document}